\newcommand{\arxcam}[2]{#1}  
\newcounter{linenum}
\def\codeTabSpace{\hspace*{6mm}}
\newenvironment{code}%
{\begin{tabbing}%
\codeTabSpace \= \hspace*{30mm} \= \hspace*{40mm} \= \hspace*{42mm} \= \kill%
}%
{\end{tabbing}%
}
\newcounter{ind}
\newcommand{\n}{\addtocounter{ind}{7}\hspace*{7mm}}
\newcommand{\p}{\addtocounter{ind}{-7}\hspace*{-7mm}}
\newcommand{\nl}{\\\stepcounter{linenum}{\scriptsize\arabic{linenum}}\>\hspace*{\value{ind}mm}}
\newcommand{\bl}{\\[-1.5mm] \>\hspace*{\value{ind}mm}}
\newcommand{\firstline}{\stepcounter{linenum}{\scriptsize \arabic{linenum}}\>}
\newcommand{\llabel}[1]{\addtocounter{linenum}{-1}\refstepcounter{linenum}\label{#1}}
\newcommand{\lref}[1]{\ref{#1}} 
\newtheorem{theorem}{Theorem}
\newtheorem{definition}[theorem]{Definition}
\newtheorem{lemma}[theorem]{Lemma}
\newtheorem{corollary}[theorem]{Corollary}
\newtheorem{observation}[theorem]{Observation}
\newtheorem{proposition}[theorem]{Proposition}
\title{When Is Recoverable Consensus  Harder Than Consensus?}
\author{Carole Delporte-Gallet}
\affiliation{\institution{Universit\'{e} Paris Cit\'{e}}
	\department{IRIF}
	\city{Paris}
	\country{France}}
\author{Panagiota Fatourou}
\affiliation{\institution{Universit\'{e} Paris Cit\'{e}}
	\department{LIPADE}
	\city{Paris}
	\country{France}}
\affiliation{\institution{FORTH ICS}
	\city{}
	\country{}
}
\affiliation{\institution{University of Crete}
	\city{Heraklion}
	\country{Greece}}
\author{Hugues Fauconnier}
\affiliation{\institution{Universit\'{e} Paris Cit\'{e}}
	\department{IRIF}
	\city{Paris}
	\country{France}}
\author{Eric Ruppert}
\affiliation{\institution{York University}
	\city{Toronto}
	\country{Canada}}
\newcommand{\op}[1]{{\sc #1}}
\newcommand{\ignore}[1]{}
\newcommand{\here}[1]{}
\newcommand{\floor}[1]{\left\lfloor #1 \right\rfloor}
\newcommand{\ceil}[1]{\left\lceil #1 \right\rceil}
\newcommand{\y}[1]{{\color{blue} #1}\normalcolor}
\newcommand{\new}[1]{{\color{red} #1}\normalcolor}
\newcommand{\x}[1]{{\color{purple} #1}\normalcolor}   
\renewcommand{\y}[1]{#1}
\renewcommand{\new}[1]{#1}
\renewcommand{\x}[1]{#1}
\newcommand{\run}{run}
\newcommand{\cond}[1]{{#1}-recording}
\newcommand{\Perform}{{\tt Perform}}
\newcommand{\RUniversal}{{\sf RUniversal}}
\keywords{Recoverable consensus; consensus hierarchy; shared memory; readable objects; asynchronous; wait-free; non-volatile memory; crash and recovery}
\begin{document}
\fancyhead{}

\begin{abstract}
We study the ability of different shared object types to solve recoverable consensus
using non-volatile shared memory in a system with crashes and recoveries.
In particular, we compare the difficulty of solving recoverable consensus
to the difficulty of solving the standard wait-free consensus problem in a system
with halting failures.
We focus on the model where individual processes may crash and recover
and the large
class of object types that are equipped with a read operation.
We characterize the readable object types that can
solve recoverable consensus among a given number of processes.
Using this characterization, we show that the number of processes that can solve consensus 
using a readable type can be larger than the number of processes that 
can solve recoverable consensus using that type, but only slightly larger.
\end{abstract}

\maketitle

\here{Must update location of full version \cite{full}}


\section{Introduction}

Recoverable consensus can play a key role in the study of asynchronous systems with non-volatile
shared memory where processes can crash and recover, just as the standard consensus problem plays a 
central role in the study of asynchronous systems where processes may halt.
In this paper, our goal is to leverage extensive research on the 
solvability of the standard consensus problem in systems equipped with different types of shared
objects to gain knowledge about recoverable consensus in systems with non-volatile  memory.

We consider an asynchronous model of computation, where processes communicate with one another
by accessing shared memory.  
In particular, we are interested in \x{studying how concurrent} algorithms can take advantage
of recent advances in non-volatile main memory, which maintains its stored values 
even when its power supply is turned off.
This allows for algorithms that can carry on with a computation when
processes crash and recover.
We consider \y{a standard theoretical model~\cite{ABH18,Gol20,GH17,GH18} for this setting},
where each process's local memory is volatile, but shared memory is non-volatile,
{\y and processes may crash and recover individually in an asynchronous manner.
After} a process crashes, its local memory, including its programme counter,
is reinitialized to its initial state when the process recovers.
Process crashes do not affect the state of shared memory.
At recovery time, the process begins to execute its code again from the beginning\footnote{\y{Alternatively, 
it could execute a recovery function. Our results hold
either way. We use the simpler assumption of re-starting \x{upon} recovery
to prove our  results. 
}}.
We refer to the sequence of steps that a process takes between crashes as a \emph{\run} of its code.

The consensus problem, where each process gets an input and all processes must agree to output one of them,
has been central to the study of shared-memory computation 
in asynchronous systems with process halting failures (but no recoveries).
A shared object type is defined by a sequential specification, which specifies
the set of possible states of the object, the operations that can be performed on it,
and how the object changes state and returns a response when an operation is applied on it.
Herlihy \cite{Her91} defined the consensus number of a type $T$, denoted $cons(T)$,
to be the maximum number of processes that can solve consensus using objects of type $T$ and
read/write registers, or $\infty$ if there is no such maximum.
The classification of  types according to their consensus number is called 
the \emph{consensus hierarchy}.
This classification is particularly meaningful because of Herlihy's universality result:  a type $T$ can
be used (with registers) 
to obtain wait-free implementations of {\it all} object types in a system of $n$ processes
if and only if $cons(T)$ is at least~$n$.

\begin{figure*}[t]
\input{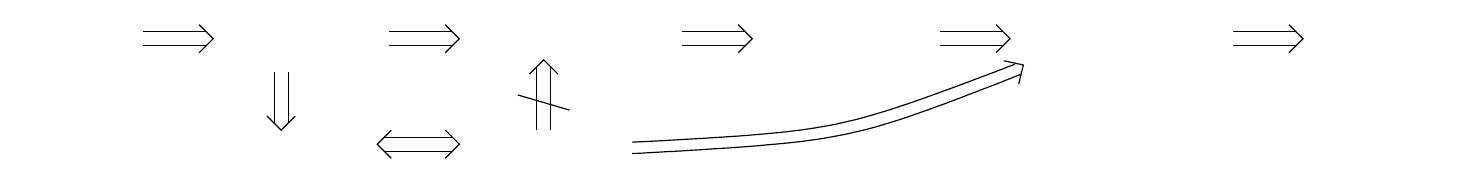_t}
\Description{Figure showing relationships:  n-recording implies solvability of n-process RC, which implies (n-1)-recording.  Solvability of n-process RC also implies solvability of n-process consensus, which is equivalent to n-discerning.  n-discerning does not imply (n-1)-recording but does imply (n-2)-recording.}
\caption{Relationships between conditions and solvability of consensus and recoverable consensus (with independent crashes) using a deterministic, readable type.  \label{summary}}
\end{figure*}

Golab \cite{Gol20} defined the {\it recoverable consensus} (RC) problem, where processes must agree 
on one of their input values, even if processes may crash and recover.
An algorithm for  RC  defines a routine for each process to execute that takes an 
input value and eventually returns an output value, satisfying the following three properties.
\begin{itemize}
\item
Agreement:  no two output values produced  are different.  (This includes outputs by different processes and outputs of the same process when it performs multiple runs of the algorithm because it crashes and recovers.)
\item
Validity:  each output value is the input value of some process.
\item
Recoverable wait-freedom:  if a process executes its algorithm from the beginning, it either crashes or outputs a value after a finite number of its own steps.
\end{itemize}
Like Golab, we assume a process's input value does not change,
even across multiple {\run}s,
but this is not a crucial assumption.  
(If an RC algorithm requires this precondition,
it can be transformed into one that does not
using a register for each process's input.  When a process begins a \run, it reads this register
and, if it has not yet been written, the process writes its input value.  
It then uses
the value in the register as its input, ensuring that all of the process's
{\run}s of the original algorithm use the same input value.)
\new{Berryhill, Golab and Tripunitara \cite{BGT15} described how Herlihy's universality result
carries over to the model with crashes and recoveries, using RC in place
of consensus.
(See Section \ref{universal} for details.)}

There are two common failure models for crashes and recoveries: \y{\emph{simultaneous crashes}~\cite{IMS16}}, 
where all processes  crash simultaneously, and \y{\emph{independent crashes} (introduced in~\cite{GR16} to study recoverable mutual exclusion),}
where processes \y{can crash and recover individually in an asynchronous way}. 
Golab \cite{Gol20} 
defined two recoverable consensus hierarchies.
For an object type $T$, the \emph{simultaneous RC number} of $T$ is the maximum number
of processes that can solve RC using \x{an unbounded number of} shared objects of type $T$ and read/write registers
when simultaneous crashes may occur.
Similarly, the \emph{independent RC number} of $T$, which we denote $rcons(T)$, is the maximum number
of processes that can solve RC using shared objects of type $T$ and read/write registers
when independent crashes may occur.
In both cases, if no maximum exists we say the RC number is $\infty$.
This is a slight modification of Golab's 
definition.\footnote{Golab's definition of RC numbers required the RC algorithms to use a \emph{bounded} number of objects.  We permit an infinite number of objects. 
When Jayanti \cite{Jay97} formalized Herlihy's
consensus hierarchy, he similarly allowed an unbounded number of objects to be used in solving consensus.
(However, it follows from K\"onig's Lemma \cite{Kon27} that any wait-free algorithm for the standard 
consensus problem that uses objects with \x{finite} non-determinism will use finitely many objects.)
Universal constructions, which are one of the main motivations for studying the hierarchy,
require an infinite number of instances of consensus anyway,
so even if each instance
uses a \x{finite} number of objects, the overall construction would still use an infinite number.
}
\new{As an example, 
\arxcam{we show in Appendix \ref{stack-proof} that $rcons(stack)=1$,}{$rcons(stack)=1$ \cite{full},}
whereas it is known that 
$cons(stack)=2$~\cite{Her91}.}

\subsection{Our Results}

We focus on independent crashes since a simple extension \x{of} Golab's result \cite{Gol20}
described in Section \ref{simultaneous}
shows that RC  has exactly the same difficulty as consensus in a system with simultaneous crashes.

Our main results are for deterministic shared object types that are \emph{readable}, meaning
that they are equipped with a read operation that returns the current state of the object
without changing it.
We define, for all $n\geq 2$, the \cond{$n$} property for shared object types.
\x{Roughly speaking, a readable type $T$ is \cond{$n$} if $n$ processes can be divided into two teams
and use one object of type $T$ to determine which of the two teams ``wins'', even when processes crash and recover.
The first team to perform an update operation on the object is the winning team, and this 
information is \emph{recorded} in the object's state, so that processes can determine 
which team wins by reading the object.

We show in Section \ref{sec:sufficient} that being \cond{$n$} is sufficient for solving RC among $n$ processes.
We also show in Section \ref{sec:necessary} that the slightly weaker condition of being \cond{$(n-1)$} is necessary for solving RC
among $n$ processes.
Thus, we have a fairly simple way of determining the approximate value of $rcons(T)$:  if 
$T$ is \cond{$n$} but not \cond{$(n+1)$}, we know that $rcons(T)$ is either $n$ or $n+1$.}

\x{Our \cond{$n$} property is related to Ruppert's   $n$-discerning property \cite{Rup00},
which was defined to characterize readable types that can solve $n$-process consensus.
In Section \ref{sec:relationship}, we prove relationships between these two properties.}
This allows us to prove that if a type has consensus number $n$, then its RC
number is between $n-2$ and $n$.
We give examples of types $T$ with $rcons(T)=cons(T)$ and others with $rcons(T) < cons(T)$.
In Section \ref{sec:robustness}, we also use our characterization to show that weak types do not become much stronger
(in terms of their power to solve RC) when used together.
\new{Section \ref{universal} describes how Herlihy's motivation for studying the consensus hierarchy
carries over to the RC hierarchy for the setting of non-volatile memory.}
\x{See Figure \ref{summary} for an overview of our results.}

\ignore{
\here{Youla suggests we might want a table that summarizes our results. But I'm not sure how to design such a table.
PF: I like the description above and I do not think any more that a table is needed,
so I have removed this comment.}
}


\section{Simultaneous Crash Model}
\label{simultaneous}


In the case of simultaneous crashes, the RC hierarchy is identical to the standard consensus hierarchy.

\begin{theorem}
\label{system-thm}
Recoverable consensus is solvable among $n$ processes using objects of type $T$ and registers 
\x{in the simultaneous crash model} if and only if $cons(T)\geq n$.
\end{theorem}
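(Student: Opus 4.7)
The statement is an ``if and only if,'' so the plan is to prove both directions separately. The forward direction ($\Rightarrow$) is the easy one: any RC algorithm in the simultaneous crash model is, by definition, correct in every admissible execution, which includes executions in which no crash ever occurs. In such a crash-free execution each process performs a single \run{} of its code, and the Agreement, Validity, and Recoverable wait-freedom conditions collapse to the standard Agreement, Validity, and wait-freedom conditions of consensus. Hence the same algorithm solves ordinary $n$-process consensus using $T$ and registers, so $cons(T)\geq n$.

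For the reverse direction ($\Leftarrow$), assume $cons(T)\geq n$, so there is a wait-free $n$-process consensus algorithm $A$ using objects of type $T$ and read/write registers. The plan is to simulate, in the simultaneous crash model, a sequence of standard (non-recoverable) consensus attempts $A_1,A_2,\ldots$, each using a fresh copy of the $T$-objects and registers required by $A$, until one attempt completes without being interrupted by a crash. Concretely I would allocate: a non-volatile register $D$ (initially $\bot$) that will store the final decision; a shared counter $k$ (initially $1$) indexing the current attempt, implementable from $T$ via Herlihy's universality since $cons(T)\geq n\geq 2$; and a non-volatile per-process flag $f[p]$ (initially $0$). Each \run{} of process $p$ with input $v$ proceeds: (i) read $D$, and if it is not $\bot$ return its value; (ii) read $k$ into a local variable $e$, and if $f[p]=e$ then $p$ was in the middle of attempt $e$ when it last crashed, so advance $k$ from $e$ to $e+1$ by CAS and re-read $k$ into $e$; (iii) set $f[p]:=e$; (iv) run $A_e$ with input $v$ to obtain $d$; (v) CAS $D$ from $\bot$ to $d$; (vi) return the current value of $D$.

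The correctness arguments are then: Validity holds because every value ever written to $D$ comes out of some $A_e$, which outputs a value proposed to it; Agreement holds because $D$ is set at most once (by CAS) and all returns read $D$; and Recoverable wait-freedom holds because each \run{} performs $O(1)$ primitive operations plus one call to the wait-free algorithm $A_e$. The main obstacle, and the place where simultaneity of crashes is essential, is showing that no decided value is ever ``lost'' due to a consensus instance that is left in a partially-executed state by a crash. The key invariant is that $f[p]$ is set to $e$ \emph{before} $p$ ever touches the objects of $A_e$, so if the simultaneous crash catches any process inside $A_e$, every such process will, on recovery, find $f[p]=e$, cooperatively bump $k$ past $e$, and move on to the fresh instance $A_{e+1}$; conversely, if $A_e$ ever produced a decision, some process must have reached step (v), which writes $D$ and makes $D$ non-$\bot$ forever after, so that every later \run{} terminates at step (i). Since $A_e$ is wait-free, in any sufficiently long crash-free interval the current instance completes, $D$ is written, and all subsequent \run{}s return the same value, yielding an RC algorithm for $n$ processes.
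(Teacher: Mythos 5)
Your forward direction is exactly the paper's (restrict attention to crash-free executions, where RC degenerates to consensus), and your plan for the converse --- repeatedly launch fresh instances of a standard consensus algorithm, arranging that no process ever re-enters an instance it may have partially executed, until one instance completes in a crash-free window --- is also the right high-level idea. The genuine gap is in the primitives you use to coordinate the attempts. You declare $D$ to be a register but then perform a compare-and-swap on it, and you introduce a shared attempt counter $k$ supporting compare-and-swap, justified by ``Herlihy's universality since $cons(T)\geq n$.'' Herlihy's universality theorem holds in the asynchronous model with \emph{halting} failures; it does not by itself yield a linearizable, wait-free compare-and-swap object in a model where processes crash and restart. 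A process interrupted mid-operation on the implemented counter abandons that operation, may later overwrite its announcement slot while helpers can still apply the old operation, and upon recovery may need to access a one-shot internal consensus instance of the construction a second time, which plain consensus does not tolerate. This is precisely why the paper (following Berryhill, Golab and Tripunitara) notes that the universal construction carries over to the crash--recovery setting only when \emph{recoverable} consensus is substituted for consensus --- the very object you are constructing, so the argument as written is circular. A telling symptom: if an atomic recoverable compare-and-swap were actually available, your algorithm would solve RC even under \emph{independent} crashes whenever $cons(T)\geq n$, contradicting the paper's separation results (e.g.\ the type of Proposition~\ref{discerning-ceg}). All of the crash-tolerance in your construction is being smuggled in through the assumed CAS, not derived from $T$.

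The repair is to do the bookkeeping with plain single-writer registers, which is what the paper's construction does. A register $Round[j]$, written only by $p_j$ \emph{before} $p_j$ first touches instance $C_r$, guarantees $p_j$ never invokes $C_r$ twice (your flag $f[p]$ plays this role, but no shared counter $k$ is needed: each process walks through rounds $1,2,\dots$ on its own and skips any round $r$ with $Round[j]\geq r$). Instead of a single CAS-protected decision register there is an ordinary register $D[r]$ per round, and agreement is established by a round-locking argument: once some process returns in round $i$ (after reading that no $Round[k]$ exceeds $i$), every preference entering round $i+1$ must equal the value written to $D[i]$, so all later rounds decide the same value. Your validity and liveness reasoning survives this change, but agreement can no longer be dismissed as ``$D$ is set at most once''; it genuinely requires that lemma. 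As written, the proposal does not establish the theorem.
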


Golab \cite{Gol20} showed how to transform a standard consensus algorithm
into an algorithm for RC in the case of simultaneous crashes.
His transformation required a bound on the number of crashes to ensure
that the space used by the algorithm is bounded.  
Since we allow an unbounded number of objects to be used to solve RC,
a simple modification of Golab's algorithm can be used to prove Theorem \ref{system-thm}.
\arxcam{See Appendix \ref{system-proof} for details.}{See the full version \cite{full} for details.}
In view of Theorem \ref{system-thm}, we focus on determining RC numbers of types in the presence of
\emph{independent} crashes in the remainder of the paper.


\section{Readable Objects}

A \emph{deterministic} object type has a sequential specification that specifies a unique
response and state transition when a given operation is applied to an object of this type that is in a given state.
An object is \emph{readable} if it has a \op{Read} operation that 
returns the entire state of the object without altering 
it.\footnote{We use this definition for simplicity, but our results would apply equally well to the original, more general definition of readable objects in \cite{Rup00}, which allows the state of the object to be read piece-by-piece.  For example, an array of registers is also readable under the more general definition.}
Ruppert \cite{Rup00} provided a characterization of deterministic, readable types that
can solve consensus among $n$ processes.
In this section, we develop a similar characterization for RC with independent crashes,
and use this to compare the ability of types to solve the two problems.

The characterizations for consensus and for RC
are  linked to the \emph{team consensus} problem, which 
is the problem of solving consensus when the set of processes  are divided in advance into two
non-empty teams and all processes on the same team get the same input.
\x{(This problem is also known as static consensus \cite{Nei95}.)}

We first review the characterization for standard consensus \cite{Rup00}.
Suppose each process can perform a single update operation on an object $O$ of type $T$, and then
read $O$ at some later time, and, based only on the responses of these two steps, determine
which team updated $O$ first.  If this is possible, we say $T$ is $n$-discerning.

\begin{definition}
A deterministic type $T$ is called \emph{$n$-discerning} if there exist 
\begin{itemize}
\item
a state $q_0$, 
\item
a partition of $n$ processes $p_1, \ldots, p_n$
into two non-empty teams $A$ and $B$, and
\item
operations $op_1, op_2, \ldots, op_n$ 
\end{itemize}
such that, for all $j\in\{1,\ldots,n\}$, $R_{A,j} \cap R_{B,j} = \emptyset$, 
where $R_{X,j}$ is the set of pairs $(r,q)$ for which there exist distinct process indices
$i_1, \ldots, i_\alpha$ including $j$ with $p_{i_1}\in X$ such that
if $op_{i_1},\ldots, op_{i_\alpha}$ are performed in this order
on an object of type $T$ initially in state $q_0$, then $op_j$ returns $r$ and the object ends up in state $q$.
\end{definition}

\new{In this definition and in Definition \ref{cond-def}, an operation $op_i$ includes the name
of the operation and any arguments to it.  \x{For example, {\sc Write}(42) is} an operation
on a read/write register.  Operations $op_1, \ldots, op_n$ need not be distinct.}
Ruppert used a valency argument to show that any deterministic, readable type that can solve consensus
among $n$ processes must be $n$-discerning.  Conversely, 
team consensus can be solved using a readable
$n$-discerning object $O$ and one register per team as follows.
Each process \x{$p_i$} writes its input in its team's register, performs \x{its operation $op_i$} on $O$ and then 
reads $O$'s state.  The process determines which team updated $O$ first and outputs
the value  in that team's register.  A tournament
then solves consensus:  processes within each team agree on an input value
recursively and then run team consensus to choose the final output value.
The argument sketched here yields the following characterization.

\begin{theorem}[\cite{Rup00}]
\label{cons-char}
A deterministic, readable type can be used, together with registers, to solve $n$-process
wait-free consensus if and only if it is $n$-discerning.
\end{theorem}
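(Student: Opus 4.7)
The plan is to prove both directions in sequence, using the sketch already suggested in the excerpt as a scaffold and filling in the details that an inductive reduction and a valency argument need.

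For the sufficiency direction, suppose $T$ is $n$-discerning, witnessed by the state $q_0$, the partition $(A,B)$ of $p_1,\ldots,p_n$, and the operations $op_1,\ldots,op_n$. I would first build a \emph{team consensus} protocol for these $n$ processes and this particular partition. Allocate two shared registers $R_A$ and $R_B$ (one per team) and one shared object $O$ of type $T$ initialized to $q_0$. Process $p_i$ on team $X \in \{A,B\}$ (1) writes its input into $R_X$, (2) applies $op_i$ to $O$, obtaining some response $r$, (3) reads $O$ to obtain a state $q$, and (4) uses $(r,q)$ to determine which team first applied an operation to $O$ and returns the value it reads from that team's register. Correctness rests on two observations: the disjointness $R_{A,i}\cap R_{B,i}=\emptyset$ guarantees that $(r,q)$ unambiguously identifies the winning team; and the write to $R_X$ precedes $p_i$'s operation on $O$, so by the time any process observes a team $X$ as the winner, some process on team $X$ has already written its input into $R_X$, giving agreement and validity. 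To lift team consensus to full $n$-process consensus I would use a tournament: the processes in $A$ first solve $|A|$-process consensus among themselves, the processes in $B$ solve $|B|$-process consensus among themselves, and then all $n$ processes use the resulting value as their input to the team-consensus protocol above. The recursive subproblems are solvable because any restriction of the $n$-discerning witness to a subset of processes containing representatives of both teams remains $k$-discerning, so the induction bottoms out at $k=1$, which is trivial.

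For the necessity direction I would run Herlihy's standard bivalency argument on an arbitrary wait-free consensus algorithm for $n$ processes using objects of type $T$ and registers. Starting from a bivalent initial configuration, reach a critical bivalent configuration $C$ in which every process has a single enabled next step leading to a univalent configuration; by wait-freedom and the usual argument the critical step cannot be on a register (since register steps by different processes either commute or are overwritten, contradicting bivalency), so from $C$ all $n$ processes are poised to act on the same object $O$ of type $T$. Letting $op_i$ be the operation $p_i$ is poised to perform, and $q_0$ be the state of $O$ at $C$, partition the processes into team $A$ (those whose next step leads to a $0$-valent configuration) and team $B$ (those whose next step leads to a $1$-valent configuration); both are non-empty since $C$ is critical.

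It remains to verify the discerning property $R_{A,j}\cap R_{B,j}=\emptyset$ for each $j$. This is where readability does the real work: I would show that if some pair $(r,q)$ were achievable both from an $A$-first schedule and from a $B$-first schedule reaching the same moment at which $p_j$ has just performed $op_j$, then we could extend each schedule by letting only $p_j$ continue solo. Because $p_j$'s view after its operation consists exactly of the response $r$ together with whatever it learns from subsequent \op{Read}s (and those reads return the state $q$ of $O$), together with any private registers it touches, one can splice the two continuations and force $p_j$ to output the same value in a $0$-valent and in a $1$-valent extension, contradicting agreement. The main obstacle here is making this indistinguishability argument airtight: one must keep the registers and the object state consistent across the two schedules so that $p_j$'s solo continuation really cannot tell them apart, which is why it is crucial that $O$ is readable and that the discerning condition is stated in terms of the pair $(r,q)$ rather than $r$ alone. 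Once this indistinguishability is established, the disjointness of $R_{A,j}$ and $R_{B,j}$ follows, completing the proof that $T$ is $n$-discerning.
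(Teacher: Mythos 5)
Your proposal is correct and follows essentially the same route the paper sketches for Ruppert's theorem: team consensus via a single readable $n$-discerning object plus a tournament for sufficiency, and a critical-configuration valency argument for necessity. One small conceptual note: readability is really what makes the \emph{sufficiency} direction work (each process must be able to read $O$'s state after updating it), whereas the necessity argument you give never actually uses readability --- the two schedules leave all of shared memory, including $O$, in identical states and leave $p_j$ with the same response $r$, so $p_j$'s solo continuation is identical whether or not $O$ is readable (just as the paper notes for the analogous Theorem~\ref{necessary}).
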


We now consider how to characterize readable types that can solve 
\emph{recoverable} consensus, with independent process crashes.
\new{\emph{Recoverable team consensus} is the RC problem where the processes
are partitioned in advance into two non-empty teams and 
inputs are constrained so that all processes on the same team have the same input value.}
We shall show that RC is solvable if and only if recoverable team consensus is solvable:
\x{the only if direction is trivial, and the converse will be shown
using the same tournament algorithm outlined above.}
So, it suffices to characterize types that can solve recoverable team consensus for $n$ processes.

We shall define a property called \cond{$n$} such that a type $T$ satisfying
the property will allow $n$ processes to solve recoverable team consensus in a simple way.
A shared object $O$ of type $T$ is initialized to some state $q_0$.
To solve team consensus using an $n$-discerning type, each process performs a
single operation on $O$ and then reads $O$, and is able to conclude
from the responses to these two steps which team updated $O$ first.
There are two key difficulties when we consider processes that may crash and recover:
(1) \x{if a process crashes after performing its update, thereby losing the response of that update, the process cannot use the response to determine which team won}, 
and (2) a process that recovers should try to avoid performing its update 
on $O$ a second time so that it does not obliterate the evidence of which team updated $O$ first.

To cope with (1), our new \cond{$n$} property should allow a process
to determine which team updated $O$ first  based simply on the state of $O$,
which can be read at any time.  Thus, two sequences of update operations
that start with processes on opposite teams must not take $O$ to the same state.
This is formalized in condition \ref{empty-cond} of Definition \ref{cond-def}, below.

We now consider how to cope with (2).
If  $O$ could never return to its initial state $q_0$,
\x{checking that $O$'s state is $q_0$ before updating $O$ would ensure that no process ever updates $O$ twice.}
(See the code for processes on team $A$ in Figure \ref{team-alg}.)
However, we can solve team consensus under a  weaker condition:
$O$'s state \emph{can} return to $q_0$ after a process from team $A$ updates $O$ first,
\emph{provided that} team $B$ has only one process.
In this case, condition \ref{empty-cond} of Definition \ref{cond-def} implies that the
state cannot return to $q_0$ if a process on team $B$ updates $O$ first.
Processes on team $A$ behave as before, updating $O$ if they find it in state $q_0$.
If $|B|>1$, processes on team $B$ do likewise.
However, if $|B|=1$, the lone process on team $B$ updates $O$ if it finds $O$ in state $q_0$ \emph{and} sees that no process on team $A$ has started  its algorithm:  in this
case it knows that no operation has been performed on $O$, since $O$ can
return to $q_0$ only if a process on team $A$ updated it first.
If the lone process on team $B$ finds that a process on team $A$ has already started, it simply
outputs team $A$'s input value.
(See the code for processes on team $B$ in Figure \ref{team-alg}.)
This motivates condition \ref{init-A-cond} of Definition \ref{cond-def} below.  A symmetric scenario motivates condition~\ref{init-B-cond}.

\x{The approach of having processes on team $B$ defer to team $A$ if they see that a process on team $A$ has started running works only if $|B|=1$:  if the algorithm used this approach with $|B|>1$, one process on team $B$ might start running before any process on team $A$ and later go on to be the first process to update $O$, while another process 
on team $B$ might start after a process on team $A$ has taken steps and defer to team $A$.  In this case, the latter process on team $B$ would
conclude that team $A$ won, while others would conclude that team $B$ won, violating agreement.}

These considerations lead us to formulate the \cond{$n$} property in Definition \ref{cond-def}, which 
uses the following notation.
\new{
Fix a deterministic, readable type $T$.
Let $X$ be a subset \x{of the set of all processes} $\{p_1, \ldots, p_n\}$
and let $op_1, \ldots, op_n$ be operations. 
Let~$q_0$ be a state of type $T$.
Define $Q_X(q_0, op_1, \ldots, op_n)$ to be 
the set of all states $q$ for which there exist distinct process indices $i_1, \ldots, i_\alpha$ with $p_{i_1}\in X$ such that the sequence of operations $op_{i_1},\ldots, op_{i_\alpha}$
applied to an object of type $T$ initially in state $q_0$ leaves the object in state $q$.
We omit the parameters of $Q_X$ when they are clear from  context. 
}

\begin{definition}
\label{cond-def}
A deterministic type $T$ is \cond{$n$} if
there exist
\begin{itemize}
\item
a state $q_0$,
\item
a partition of $n$ processes $p_1, \ldots, p_n$ into two non-empty teams $A$ and $B$, and
\item
operations $op_1, \ldots, op_n$
\end{itemize}
satisfying the following three conditions.
\begin{enumerate}
\item\label{empty-cond}
$Q_A(q_0, op_1, \ldots, op_n) \cap Q_B(q_0, op_1, \ldots, op_n) = \emptyset$.
\item\label{init-A-cond}
$q_0\notin Q_A(q_0, op_1, \ldots, op_n)$ or $|B|=1$.
\item\label{init-B-cond}
$q_0 \notin Q_B(q_0, op_1, \ldots, op_n)$ or $|A|=1$.
\end{enumerate}
\end{definition}

\ignore{
\newcommand{\weak}{D'}
\begin{definition}
We define a property $\weak_n$ to be the same as $\cond{n}$, except without the requirements \ref{init-A-cond} and \ref{init-B-cond}.
\end{definition}
}

We call a type that satisfies this property \cond{$n$} because it \emph{records} in its state 
information about the team that first updates the object, if it is initialized to state $q_0$.

We first prove some simple consequences of Definition \ref{cond-def}.

\begin{observation}
For $n\geq 2$, if a deterministic type is \cond{$n$}, then it is $n\mbox{-discerning}$.
\end{observation}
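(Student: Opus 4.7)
The plan is to reuse exactly the witnesses supplied by the $n$-recording property. Suppose $T$ is $n$-recording, and let $q_0$, the partition into teams $A,B$, and operations $op_1,\ldots,op_n$ be the witnesses. I will show that these same objects witness that $T$ is $n$-discerning. Only condition~\ref{empty-cond} of Definition~\ref{cond-def} will be used; conditions~\ref{init-A-cond} and~\ref{init-B-cond} are irrelevant for this direction.

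The key observation is that the set $R_{X,j}$ from the definition of $n$-discerning is more restricted than the set $Q_X$ from Definition~\ref{cond-def}: to have $(r,q)\in R_{X,j}$ we need a sequence of operations starting with some $p_{i_1}\in X$ that (i) contains the index $j$, (ii) drives the object from $q_0$ to $q$, and (iii) makes $op_j$ return $r$. The set $Q_X$ asks only that such a sequence exists with property (ii), imposing no constraint on responses or on which indices must appear. Consequently, the ``state projection'' $\{q : (r,q)\in R_{X,j}\text{ for some }r\}$ is contained in $Q_X$.

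With this containment in hand, fix any $j\in\{1,\ldots,n\}$ and suppose for contradiction that some pair $(r,q)$ lies in $R_{A,j}\cap R_{B,j}$. Then $q$ is witnessed in $Q_A$ by the sequence that places $(r,q)$ in $R_{A,j}$, and $q$ is witnessed in $Q_B$ by the sequence that places $(r,q)$ in $R_{B,j}$. Hence $q\in Q_A\cap Q_B$, contradicting condition~\ref{empty-cond} of the $n$-recording property. Therefore $R_{A,j}\cap R_{B,j}=\emptyset$ for every $j$, which is precisely the definition of $n$-discerning.

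There is really no obstacle here; the only subtlety worth pointing out in the write-up is the passage from a ``response-and-state'' distinction to a ``state-only'' distinction, i.e., the observation that discarding the response coordinate $r$ and the requirement that $j$ appear in the sequence only enlarges the set being considered, so empty intersection of the larger sets $Q_A,Q_B$ forces empty intersection of the smaller sets $R_{A,j},R_{B,j}$.
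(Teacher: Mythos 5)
Your proof is correct and follows essentially the same route as the paper's: reuse the witnesses $q_0, A, B, op_1,\ldots,op_n$, observe that any sequence witnessing $(r,q)\in R_{X,j}$ also witnesses $q\in Q_X$, and conclude that $R_{A,j}\cap R_{B,j}\neq\emptyset$ would contradict condition~\ref{empty-cond} of the \cond{$n$} property. Your explicit remark that only the state coordinate matters and that conditions~\ref{init-A-cond} and~\ref{init-B-cond} are not needed is a slightly more detailed write-up of the same argument.
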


To see why this is true, we can use the same choice of $A,B,q_0,$ $op_1,\ldots,op_n$ for both definitions.
If, for some $j$, there were an $(r,q) \in R_{A,j}\cap R_{B,j}$ then $q$ would also be in $Q_A\cap Q_B$,
which would violate property \ref{empty-cond} of the definition of \cond{$n$}.  So we can conclude that $R_{A,j}\cap R_{B,j}$ must be empty, as required for the definition of $n$-discerning.

\begin{observation}
For $n\geq 3$, if a deterministic type is \cond{$n$}, then it is \cond{$(n-1)$}.
\end{observation}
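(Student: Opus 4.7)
The plan is to build witnesses for \cond{$(n-1)$} by simply deleting one process from the \cond{$n$} witnesses, and then check that the three defining conditions are inherited. The only subtlety is choosing \emph{which} team to shrink so that conditions~\ref{init-A-cond} and~\ref{init-B-cond} survive.

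Suppose $T$ is \cond{$n$}, witnessed by $q_0$, the partition $A,B$ of $p_1,\dots,p_n$, and operations $op_1,\dots,op_n$. Since $n\geq 3$, at least one team has size at least $2$; by symmetry, assume $|A|\geq 2$. Delete an arbitrary process from $A$ and relabel, so that the remaining $n-1$ processes form teams $A'=A\setminus\{p\}$ and $B$, both non-empty, with associated operations (those of the remaining $n-1$ processes). I propose these as the witnesses for \cond{$(n-1)$}.

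The key observation driving everything is that shrinking the set of processes can only shrink the reachability sets: writing $Q'_X$ for the set $Q_X$ computed in the $(n-1)$-process scenario, every sequence of operations appearing in $Q'_{A'}$ or $Q'_B$ is also a sequence that appears in $Q_A$ or $Q_B$ respectively (since $A'\subseteq A$ and $B$ is unchanged, and we are using a subset of the original operations). Hence $Q'_{A'}\subseteq Q_A$ and $Q'_B\subseteq Q_B$. This immediately gives condition~\ref{empty-cond}: $Q'_{A'}\cap Q'_B\subseteq Q_A\cap Q_B=\emptyset$.

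For condition~\ref{init-A-cond}, if the original witness satisfied $q_0\notin Q_A$ then $q_0\notin Q'_{A'}$; otherwise $|B|=1$, which is preserved. For condition~\ref{init-B-cond}, because we chose $|A|\geq 2$, the original condition forces $q_0\notin Q_B$ (the alternative disjunct $|A|=1$ does not hold), and thus $q_0\notin Q'_B$. The symmetric case where $|A|=1$ (and necessarily $|B|\geq 2$) is handled by deleting a process from $B$ instead and swapping the roles of the two conditions. The main point to get right is precisely this case split: one cannot always delete from the same team, because the disjunct $|A|=1$ (or $|B|=1$) in conditions~\ref{init-A-cond}--\ref{init-B-cond} is the only thing saving condition~\ref{init-B-cond} (respectively~\ref{init-A-cond}) when the team in question has size one, and removing from that team would lose it.
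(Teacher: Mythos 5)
Your proof is correct and takes essentially the same approach as the paper, which also obtains the \cond{$(n-1)$} witnesses by omitting one process from the larger team (equivalently, a team of size at least $2$) and reusing $q_0$ and the remaining operations. You additionally spell out the monotonicity $Q'_{A'}\subseteq Q_A$, $Q'_B\subseteq Q_B$ and the verification of conditions \ref{init-A-cond}--\ref{init-B-cond}, which the paper leaves implicit.
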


If a type satisfies the definition of \cond{$n$} with teams $A$ and $B$, we can omit one process from the larger team
to get a division of $n-1$ processes into non-empty teams $A'$ and $B'$.  We use the
same initial state $q_0$ and assign the same operation to each process to satisfy the definition \cond{$(n-1)$}.

We  now summarize the results about deterministic, readable types that we prove in the remainder of this section.
\x{Theorem \ref{sufficient} shows that any readable type that is \cond{$n$} is capable
of solving RC among $n$ processes.
We prove in Theorem \ref{necessary} that
all types that can solve RC among $n$ processes satisfy the \cond{$(n-1)$} property.
(This is true even if the type is not readable.)
Given a specification of a shared type, it is fairly straightforward to check whether
it is \cond{$n$}.
By determining the maximum $n$ for which a given readable object type $T$ is \cond{$n$},
we can conclude that $rcons(T)$ is either $n$ or $n+1$.}

We also prove  that an $n$-discerning type must be \cond{$(n-2)$} (Theorem \ref{discerning}), but not necessarily \cond{$(n-1)$} (Proposition \ref{discerning-ceg}).
As a corollary of these results, we show that $cons(T)-2 \leq rcons(T) \leq cons(T)$.
These relationships are summarized in Figure \ref{summary}.
\new{In Theorem \ref{robustness}, we also show how the power of a collection of readable types to solve 
RC is related to the power of each type when used in isolation.}


\subsection{Sufficient Condition}
\label{sec:sufficient}

\begin{figure}
{\small
\begin{code}
\firstline
shared variables\nl
\n  Object $O$ of type $T$, initially in state $q_0$\nl
    Registers $R_A$ and $R_B$, initially in state $\bot$\bl\nl
\p\op{Decide}$(v)$ // code for process $p_i$ on team $A$\nl
\n $R_A \leftarrow v$ \llabel{A-write-input}\nl
   $q \leftarrow O$ \llabel{A-read-1}\nl
   if $q=q_0$ then \llabel{A-begin-if}\nl
\n   	apply $op_i$ to $O$ \llabel{A-apply-op}\nl
        $q\leftarrow O$ \llabel{A-read-2}\nl
\p end if \llabel{A-end-if}\nl
   if $q\in Q_A$ then return $R_A$ \llabel{A-switch}\llabel{A-read-A}\nl
   else return $R_B$ \llabel{A-read-B}\nl
   end if\nl
\p end \op{Decide}\bl\nl
\op{Decide}$(v)$ // code for process $p_i$ on team $B$\nl
\n $R_B \leftarrow v$ \llabel{B-write-input}\nl
   $q \leftarrow O$ \llabel{B-read-1}\nl
   if $q=q_0$ then  \llabel{B-test-1}\nl
\n		if $|B|=1$ and $R_A\neq \bot$ then \llabel{B-test-2}\nl
\n          return $R_A$ \llabel{B-read-A-1}\nl
\p      else  \nl
\n   	    apply $op_i$ to $O$ \llabel{B-apply-op}\nl
            $q\leftarrow O$ \llabel{B-read-2}\nl
\p      end if\nl
\p end if\nl
   if $q\in Q_A$ then return $R_A$ \llabel{B-read-A-2}\nl
   else return $R_B$ \llabel{B-read-B}\nl
   end if\nl
\p end \op{Decide}
\end{code}
}
\caption{Algorithm for recoverable team consensus (assuming $q_0\notin Q_B$). \label{team-alg}}
\end{figure}

We use the algorithm in Figure \ref{team-alg} to show that recoverable team consensus can
be solved using a deterministic, readable object $O$ whose type is \cond{$n$}.  The intuition for the algorithm has
already been described above, but we now describe the code in more detail.
The code assumes $q_0\notin Q_B$; if $q_0\in Q_B$, then $q_0\notin Q_A$ and we would reverse the roles of $A$ and $B$ in the code.
Each process first writes its input  in its team's register.
It then reads $O$.  If $O$ is not in the initial state $q_0$, then the process
 determines which team went first based on the state of $O$ 
and returns the value written in that team's register 
(lines \ref{A-read-A}--\ref{A-read-B} and lines \ref{B-read-A-2}--\ref{B-read-B}).
Otherwise, it updates $O$ before reading the state again (lines \ref{A-apply-op}--\ref{A-read-2} and \ref{B-apply-op}--\ref{B-read-2}) to determine which team updated $O$ first.
There is one exception:  if team $B$ has only one process, it yields to team $A$ (line \ref{B-read-A-1}) if it sees
that some process on team $A$ has already written its input value.
This allows for the case where $q_0\in Q_A$ and $|B|=1$:  it could be that a process
on team $A$ updated $O$ first, and then other processes (including the process on team $B$, in a previous \run)
performed updates that returned $O$ to state $q_0$.
In this case, those processes would have output team $A$'s input value,
so we must ensure that the process on team $B$ does not perform its update again, 
since that could cause processes to output team $B$'s input value, violating agreement. 

\new{The next lemma will help us argue that the algorithm behaves correctly
in the tricky case where $q_0\in Q_A$ and $|B|=1$.}

\begin{lemma}
\label{B-steps}
Suppose $q_0,A,B,op_1,\ldots,op_n$ satisfy the definition of \cond{$n$} for a deterministic type $T$.
Let $X\in \{A,B\}$.
If $q_0\notin Q_X$ and $i_1, \ldots, i_\alpha$ is a sequence of distinct process indices such that 
the sequence of operations $op_{i_1}, \ldots, op_{i_\alpha}$ takes an object of type
$T$ from state $q_0$ to state $q_0$, then the indices of 
all processes of team $X$ appear in the sequence.
\end{lemma}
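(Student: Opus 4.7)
The strategy is proof by contradiction: suppose some process $p_k \in X$ is missing from the sequence $i_1, \ldots, i_\alpha$, and derive a violation of one of the three conditions in the definition of \cond{$n$}.

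First I would examine the team of the first process $p_{i_1}$. If $p_{i_1} \in X$, then since the sequence of operations $op_{i_1}, \ldots, op_{i_\alpha}$ takes $q_0$ to $q_0$, the definition of $Q_X$ immediately yields $q_0 \in Q_X$, contradicting the hypothesis. So we may assume $p_{i_1}$ belongs to the other team $Y$, which gives $q_0 \in Q_Y$. Now I invoke condition~\ref{init-A-cond} or~\ref{init-B-cond} (whichever applies to team $Y$): since $q_0 \in Q_Y$, this condition forces $|X| = 1$. In particular, $p_k$ is the only process in $X$, so no process from $X$ appears in $i_1, \ldots, i_\alpha$ at all.

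The main step is to build a single state that simultaneously lies in $Q_X$ and $Q_Y$, violating condition~\ref{empty-cond}. Let $q'$ be the state produced by applying $op_k$ to an object in state $q_0$. The one-step sequence $op_k$ starts with $p_k \in X$, so by definition $q' \in Q_X$. On the other hand, consider the extended sequence $op_{i_1}, \ldots, op_{i_\alpha}, op_k$: because its first $\alpha$ operations take $q_0$ back to $q_0$, determinism of $T$ implies that the whole extended sequence ends in state $q'$. Its indices are distinct (since $p_k$ was absent from $i_1, \ldots, i_\alpha$) and its first index corresponds to a process in $Y$, so $q' \in Q_Y$. Hence $q' \in Q_X \cap Q_Y$, contradicting condition~\ref{empty-cond}.

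I do not anticipate any serious obstacle; the proof is essentially a two-line construction once the first-step case analysis reduces us to the situation where $|X|=1$. The only care needed is to keep the symmetry between teams $A$ and $B$ straight when invoking conditions~\ref{init-A-cond} and~\ref{init-B-cond}, and to verify that the extended sequence has distinct process indices so that it legitimately witnesses membership in $Q_Y$.
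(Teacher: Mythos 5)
Your proof is correct and follows essentially the same argument as the paper: show $p_{i_1}$ must be on the opposite team, then observe that the state reached by applying the missing process's operation to $q_0$ lies in both $Q_X$ and $Q_{\overline{X}}$, contradicting condition~\ref{empty-cond}. The middle step where you invoke conditions~\ref{init-A-cond}/\ref{init-B-cond} to deduce $|X|=1$ is superfluous (the final contradiction uses only that $p_k$ is absent from the sequence and that $p_{i_1}$ is on the other team), but it does no harm.
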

\begin{proof}
To derive a contradiction, suppose the claim is false, i.e., $j\notin\{i_1,\ldots,i_\alpha\}$ for some process $p_j$ on team $X$.
If $p_{i_1}$ were on team $X$, then the fact that the sequence of operations $op_{i_1}, \ldots, op_{i_\alpha}$ take the state of an object from $q_0$ to $q_0$ would imply that $q_0 \in Q_X$, 
contrary to our assumption.
Thus, $p_{i_1}$ must be on the opposite team $\overline{X}$.
Let $q_j$ be the state that results when $op_j$ is applied to an object in state $q_0$.
We have $q_j\in Q_X$ since the sequence $op_j$ takes an object from state $q_0$ to $q_j$.
We also have $q_j\in Q_{\overline{X}}$ since the sequence $op_{i_1}, \ldots, op_{i_\alpha}, op_j$ takes an object of type
$T$ from state $q_0$ back to state $q_0$ and then to state $q_j$.
Thus, $q_j\in Q_X\cap Q_{\overline{X}}$, which violates condition \ref{empty-cond} in the definition of \cond{$n$}.
\end{proof}

\new{ 
To gain some intuition,
we describe why the following bad scenario cannot occur when 
$|B| = 1$ and $q_0 \in Q_A$. 
Suppose a process $p_1$ on team $B$ begins, sees
$R_A  = \bot$, and is poised to update $O$ at line~\ref{B-apply-op}.
Then, a process $p_2$ on team $A$ runs to completion, updating $O$ and deciding $R_A$. 
Then, other processes update $O$, returning $O$'s state to $q_0$. 
If $p_1$ were still poised to update $O$ at line \ref{B-apply-op},
then it would decide $R_B$, violating agreement.
But this cannot happen:  Lemma \ref{B-steps} ensures that $p_1$ must have been among the
processes that already applied their operations on $O$ to return $O$'s state to $q_0$.

We also describe why the condition $|B|=1$ on line \ref{B-test-2} is necessary.
If this test were missing, consider an execution where one process $p_1$ on team $B$ begins, sees $R_A=\bot$
and is about to update $O$ at line \ref{B-apply-op}.
Then, a process $p_2$ on team $A$ writes to $R_A$.
Next, another process $p_3$ on team $B$ sees that $R_A\neq \bot$ and decides $R_A$ 
(at line~\ref{B-read-A-1}).
Finally, process $p_1$  resumes  and updates $O$.
Since it is the first process to update $O$, $O$'s state would then be in $Q_B$, so
$p_1$ would then read $O$ and decide $R_B$, violating agreement.
We avoid this scenario  by the test $|B| = 1$ of line~\ref{B-test-2}: line~\ref{B-read-A-1} is executed only if
$B$ contains just one process (whereas two processes on team $B$ are needed for the bad scenario described above).
}

\begin{theorem}
\label{sufficient}
If a deterministic, readable type $T$ is \cond{$n$}, then
objects of type $T$, together with registers, can be used to solve recoverable consensus for $n$ processes.
\end{theorem}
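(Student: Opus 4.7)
The plan is to verify that the algorithm in Figure~\ref{team-alg} solves recoverable team consensus for $n$ processes using the \cond{$n$} object $O$ and two registers, and then to lift to $n$-process RC via Ruppert's tournament~\cite{Rup00}: processes recursively solve RC within each team to fix a team input value (so that all runs of the same team eventually agree), then run team consensus on the two team values at the root. The assumption $q_0 \notin Q_B$ built into Figure~\ref{team-alg} is without loss of generality, since condition~\ref{empty-cond} places $q_0$ in at most one of $Q_A$, $Q_B$ and the roles of $A$ and $B$ are symmetric in conditions~\ref{init-A-cond} and~\ref{init-B-cond}. Recoverable wait-freedom is immediate because each \op{Decide} call is straight-line code, and validity is routine: each return of $R_A$ or $R_B$ occurs only after the register has been written---either by the returning process itself, by a partner whose op on $O$ must follow its register write, or because of the explicit $\bot$ test at line~\ref{B-test-2}.

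The heart of the proof is agreement. Call team $X$ the \emph{winner} if the first process ever to execute line~\ref{A-apply-op} or~\ref{B-apply-op} belongs to team $X$. I plan to establish the invariant that once a winner exists, every read of $O$ returns a state in $Q_{\text{winner}}$, with the sole exception that the state may recur at $q_0$ only when $q_0 \in Q_A$, in which case Lemma~\ref{B-steps} (applied with $X=B$, using $q_0 \notin Q_B$) forces team~$A$ to be the winner. Two ingredients drive the argument. First, between any two consecutive occurrences of $O=q_0$ in the execution, each process applies its op at most once---a second application would require a second read of $O=q_0$ at line~\ref{A-read-1} or~\ref{B-read-1} strictly between its two applications---so the ops in any $q_0$-to-$q_0$ segment are distinct, and by the definition of $Q_A/Q_B$ the state at the end of the segment lies in the $Q$-set of the team that performed the first op of the segment. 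Second, Lemma~\ref{B-steps} guarantees that any segment starting at $q_0$ other than the initial one must have been preceded by $p_1$'s op, and the resulting $R_A \neq \bot$ then blocks $p_1$ from applying its op again via the test at line~\ref{B-test-2}; hence every such segment begins with a team-$A$ op whenever team~$A$ is the winner, sealing the invariant. Consequently the tests $q \in Q_A$ at lines~\ref{A-switch}, \ref{B-read-A-2}, and~\ref{B-read-B} correctly identify the winner whenever any op has been applied.

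The main obstacle, and the reason for the extra check at line~\ref{B-test-2}, is the case $q_0 \in Q_A$, which by condition~\ref{init-A-cond} forces $|B|=1$. Here $O$ may cycle back to $q_0$ after team~$A$ has already won, so the lone team-$B$ process $p_1$ must not blindly apply its op each time it reads $q_0$. I will analyse the two sub-cases of the line~\ref{B-test-2} test. (i)~If $p_1$ reads $O = q_0$ with $R_A = \bot$, then no team-$A$ op has yet been applied (team-$A$ updates follow the write to $R_A$ at line~\ref{A-write-input}), and Lemma~\ref{B-steps} rules out $p_1$'s having applied its op in a prior run (otherwise a return to $q_0$ would have required a team-$A$ first op and thus $R_A \neq \bot$); so $p_1$ safely applies its op and uses the subsequent read to identify the winner. (ii)~If $p_1$ reads $O = q_0$ with $R_A \neq \bot$ and returns $R_A$ at line~\ref{B-read-A-1}, then since $R_A$ never returns to $\bot$ and $|B|=1$ never changes, every future run of $p_1$ takes the same branch and never touches $O$; as $p_1$ is the only team-$B$ process, team~$B$ can never become the winner, so $R_A$ is the correct decision. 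Together with the straightforward case analysis for reads with $q \neq q_0$, these sub-cases establish agreement, completing the proof of team consensus; the tournament then lifts this to $n$-process RC.
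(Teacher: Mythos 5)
Your proposal is correct and follows essentially the same route as the paper's proof: the tournament reduction to recoverable team consensus, verification of the algorithm in Figure~\ref{team-alg}, and the same use of Lemma~\ref{B-steps} to handle the delicate case $q_0\in Q_A$, $|B|=1$ where the lone team-$B$ process must be shown to have already applied its operation before the state can return to $q_0$. The only difference is organizational---you package the state-tracking argument as a single invariant over $q_0$-to-$q_0$ segments, whereas the paper splits it into Lemmas~\ref{states-lem-1} and~\ref{states-lem-2} with explicitly named update steps---but the underlying reasoning is identical.
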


\begin{proof}
If team recoverable consensus can be solved, then RC can be solved.
Processes on each team agree recursively on an input value for their team, and then use team consensus
to determine the final output.  See \arxcam{Appendix \ref{team-tournament}}{the full version \cite{full}} for details.

Thus, it suffices to show that the algorithm in Figure \ref{team-alg} 
solves recoverable team consensus using a type $T$ that satisfies the condition of the theorem.
Since $Q_A \cap Q_B = \emptyset$, we know that either $q_0\notin Q_A$ or $q_0\notin Q_B$.
Without loss of generality, assume $q_0\notin Q_B$.  (If this is not the case, just swap the names of the two teams.)

Recoverable wait-freedom is clearly satisfied, since there are no loops in the code.
It remains to show that every execution of the algorithm satisfies validity and agreement.

\begin{lemma}
Validity and agreement are satisfied in executions where no process ever performs an update  on $O$.
\end{lemma}
\begin{proof}
In this case, $O$ remains in state $q_0$ forever.
Thus, no process can reach line \lref{A-read-A} or \lref{B-read-A-2}, since it would first have to 
update~$O$ at line \lref{A-apply-op} or \lref{B-apply-op}, respectively.
So, \x{processes output  only at line~\lref{B-read-A-1}.}
By the test on line \lref{B-test-2}, $R_A$ is written before a process outputs its value on line \ref{B-read-A-1}.
Thus, all outputs are the input value of team~$A$.
\end{proof}

For the remainder of the proof of the theorem, consider executions where at least one update is performed on $O$.
Let $s$ be the first step in the execution that performs an update on $O$.

\begin{lemma}
\label{states-lem-1}
For $X\in\{A,B\}$,
if a process on team $X$ performs $s$  and $q_0\notin Q_X$, then $O$'s state is in $Q_X$ at all times after $s$.
\end{lemma}

\begin{proof}
We first show that no process performs more than one update on~$O$.
To derive a contradiction, suppose some process performs two updates on~$O$.
Let $s'$ be the first step \x{in the execution when a process performs its {\it second} 
update on~$O$ and let $p_i$ be the process that performs $s'$}.
Let $r'$ be $p_i$'s \run\ of the code that performs~$s'$.
Since $r'$ begins after $p_i$'s first update on $O$, $r'$ begins after~$s$.
By definition of $s'$, each process does at most one update on $O$ before~$s'$.
Thus, the state of $O$ is in $Q_X$ at all times between $s$ and~$s'$.
Since $q_0\notin Q_X$, the state of $O$ is never $q_0$ between $s$ and~$s'$. 
This contradicts the fact that $r'$ must read the state of $O$ to be $q_0$ between $s$ and $s'$; 
otherwise $r'$ would not perform~$s'$.

Thus, each process performs at most one update on $O$.  By the definition of $Q_X$,
the state of $O$ is in $Q_X$ at all times after $s$.
\end{proof}

\y{
We next prove a similar lemma for the case where $q_0\in Q_A$.
}
In this case, the situation is a little more complicated.
The state of $O$ might return to $q_0$.  If this happens, we show that each process updates $O$ at most
once before the state returns to $q_0$, and that only processes of team $A$ can update $O$
after the state returns to $q_0$ and each process does so at most once.  
This is enough to ensure that $O$'s state remains
in $Q_A$ at all times.

\begin{lemma}
\label{states-lem-2}
If $s$ is performed by a process of team $A$ and $q_0\in Q_A$, 
then $O$'s state is in $Q_A$ at all times.
\end{lemma}

\begin{proof}
Since $q_0\in Q_A$, there is a unique process $p_j$ on team $B$, by condition \ref{init-A-cond} of the definition of \cond{$n$}.
$O$'s state is $q_0\in Q_A$ at all times before $s$.  
It remains to show that $O$'s state is in $Q_A$ at all times after $s$.
We consider two~cases.

First, suppose $O$ is never in state $q_0$ after $s$.  Consider any process $p_i$ that 
performs an update on $O$.  Let $s_i$ be $p_i$'s first update on $O$.  
\new{By definition, $s_i$ is either equal to $s$ or after $s$.}
Any run by $p_i$ that begins after $s_i$ (and hence after~$s$) 
that reads $O$ on line \lref{A-read-1} 
or \lref{B-read-1} sees a value different from $q_0$, so it does not perform an update on $O$.
Thus, no process performs more than one update on $O$.  
It follows from the definition of $Q_A$ that $O$'s state is in $Q_A$
at all times after $s$.

Now, suppose $O$'s state is equal to $q_0$ at some time after $s$.
Let $s''$ be the first step at or after $s$ that changes $O$'s state back to $q_0$.
\new{We next prove that no process performs two updates on $O$ between $s$ and $s''$ (inclusive).
To derive a contradiction, suppose some process performs two such updates.}
Let $s'$ be the first step when any process performs its {\it second} 
update on $O$.
\new{By definition, $s'$ is between $s$ and $s''$ (inclusive).}
Let $p_i$ be the process that performs $s'$ and let 
$r'$ be the \run\ by $p_i$ that performs $s'$.
Since $r'$ begins after $p_i$'s first update to $O$, $r'$ begins after $s$.
Thus, $r'$ reads $O$'s state to be different from $q_0$ at line \lref{A-read-1} or \lref{B-read-1}, and therefore fails the test on line \lref{A-begin-if} or \lref{B-test-1}.
This contradicts the fact that $r'$ updates $O$.
Hence, each process performs at most one update on $O$ between $s$ and $s''$ (inclusive).

It follows from the definition of $Q_A$ that the state of $O$ is in $Q_A$ at all times between $s$ and $s''$.
By Lemma \ref{B-steps}, the unique process $p_j$ on team $B$ 
updates $O$ between $s$ and $s''$ (inclusive).

Next, we argue that the process $p_j$ on team $B$ updates $O$ exactly once in the entire execution.
We have already seen that $p_j$ updates $O$ exactly once between $s$ and $s''$ (inclusive).
Any \run\ by process $p_j$ that begins after that first update to $O$ by $p_j$
(and therefore after $s$) would see that $R_A\neq \bot$, since the process on
team $A$ that performs $s$ writes to $R_A$ before~$s$.
That \run\ by $p_j$ would therefore pass the test on line \lref{B-test-2}
and could not update $O$ on line~\lref{B-apply-op}.

Thus, any updates to $O$ after $s''$ are by processes in $A$.
If there are no updates to $O$ after $s''$, then $O$ remains in state $q_0\in Q_A$ at all times after $s''$.
If there is some update to $O$ after $s''$, let $s'''$ be the first one.
\x{Since $q_0\notin Q_B$ and no process on team $B$ updates $O$ after $s''$,
 the state of $O$ can never be $q_0$ after $s'''$, by Lemma \ref{B-steps}.}
Consider any process $p_i$ on team $A$ that performs an update on $O$ after $s''$.
Let $s_i$ be $p_i$'s first update on $O$ after $s''$.
\x{By the definition of $s'''$, $s_i$ is either $s'''$ or after $s'''$.}
Any \run\ by $p_i$ that begins after $s_i$ (and therefore after $s'''$)
that reads $O$ on line~\lref{A-read-1} will see a value different from $q_0$,
so it does not perform an update on~$O$.
Thus, no process performs more than one update on $O$ after $s''$.
It follows from the definition of $Q_A$ that $O$'s state is in $Q_A$ at all times after~$s''$.
\end{proof}

\begin{lemma} 
\label{A-output}
Any output produced by a process on team $A$ is the input value of
the team that first updated $O$.
\end{lemma}
\begin{proof}
Consider a run $r$ of the code by a process in $A$ that produces an output.
If $r$ reads $O$ at line \lref{A-read-1} before $s$, then it will read the value $q_0$
and read $O$ again at line \lref{A-read-2}, which is after $s$.
Thus, the value tested at line \lref{A-switch} is read from $O$ after~$s$.

If the first update to $O$ is by a process on team $A$, the value tested is in $Q_A$,
by Lemma \ref{states-lem-1} and \ref{states-lem-2}.
So, $r$ outputs the value of $R_A$.

If the first update to $O$ is by a process on team $B$, the value tested is in $Q_B$,
by Lemma \ref{states-lem-1} and the fact that $q_0\notin Q_B$.  
Since $Q_A\cap Q_B=\emptyset$, the value tested will not be in $Q_A$.
So, $r$ outputs the value of $R_B$.

In both cases, the relevant register is written before $s$, so $r$ outputs the input value of the team that first updates~$O$.
\end{proof}

\begin{lemma} 
\label{B-output}
Any output produced by a process on team $B$ is the input value of
the team that first updated $O$.
\end{lemma}
\begin{proof}
Consider any run $r$ of the code by a process in $B$ that produces an output.
We consider three cases.

\begin{enumerate}[{Case} 1:]
\item a process from team $A$ performs $s$.

We first show $r$ returns a value read from $R_A$ by considering two subcases.
\begin{enumerate}[(a)]
\item
$q_0\in Q_A$.
In this case $|B|=1$, by condition \ref{init-A-cond} of the definition of \cond{$n$}.
By Lemma \ref{states-lem-2}, $O$'s state is in $Q_A$ at all times,
so $r$ cannot return at line \lref{B-read-B}.
Therefore, $r$ outputs the value it reads from $R_A$ at line \lref{B-read-A-1} or~\lref{B-read-A-2}.
\item
$q_0\notin Q_A$.
By Lemma \ref{states-lem-1}, $O$'s state is in $Q_A$ at all times after $s$.
If $r$ reads $O$ at line \lref{B-read-1} before $s$, it will see $q_0$ and execute the
test at line \lref{B-test-2}.
Then, it will either return the value in $R_A$ at line \lref{B-read-A-1},
or read $O$ again at line \lref{B-read-2} after $s$, getting a value in $Q_A$ and returning
 the value in $R_A$ at line \lref{B-read-A-2}.
\end{enumerate}

To derive a contradiction, suppose $R_A$ is still $\bot$ when $r$ reads it at line \lref{B-read-A-1} or \lref{B-read-A-2}.
Then, $r$ returns before $s$, since $R_A$ must be written before~$s$.
So $r$ must have read $q_0$ from $O$ at line \lref{B-read-1}.
Thus, the test at line \lref{B-test-1} is true and the test at line \lref{B-test-2} is false,
so $r$ performs an update on $O$ before $s$, contradicting the definition of $s$.

Therefore, $r$ outputs team $A$'s input value, as required.

\item
A process from team $B$ performs $s$ and $|B|>1$.
Since $q_0\notin Q_B$, it follows from Lemma \ref{states-lem-1} that
$O$'s state is in $Q_B$ at all times after $s$.
If $r$ reads $O$ at line \lref{B-read-1} before $s$, it will see $q_0$ and execute the
test at line \lref{B-test-2}, which fails because $|B|>1$.
Then, it will read $O$ again at line \lref{B-read-2} after $s$, getting a value in $Q_B$ and return
 the value in $R_B$ at line \lref{B-read-B}.
Since $r$ wrote $R_B$ at line \lref{B-write-input}, $r$ outputs team $B$'s input value, as required.

\item
A process from team $B$ performs $s$ and $|B|=1$.
Let $p_j$ be the unique process on  team $B$.
By Lemma \ref{states-lem-1} and the fact that $q_0\notin B$, the state of $O$ is in $Q_B$ at all times after $s$.

If $r$ is the \run\ of $p_j$ that performs $s$, then $r$ sees $R_A=\bot$ on line \lref{B-test-2};
otherwise it would not execute line \lref{B-apply-op}.
So, if $r$ returns a value, it reads $O$ at line \lref{B-read-2} after $s$ and gets a value in $Q_B$.
It  must then return a value at line \lref{B-read-B}.

Any \run\ $r$ of $p_j$ that ends before $s$ 
 evaluates the test at line \lref{B-test-1} to true
and the test at line \lref{B-test-2} to false, so it must crash before reaching line \lref{B-apply-op} and  does not produce an output.

If $r$ is a run of $p_j$ that starts after $s$, it  reads a value in $Q_B$ at line \lref{B-read-1}.
Since $q_0 \notin Q_B$, it would return at line~\lref{B-read-B}.

Thus, all outputs by $p_j$ are  read from $R_B$ at line \ref{B-read-B}, which contains
team $B$'s input value  written at line \lref{B-write-input}.
\end{enumerate}
\vspace*{-6mm}
\end{proof}

Lemmas \ref{A-output} and \ref{B-output} prove validity and agreement when some process updates $O$, completing the proof of Theorem~\ref{sufficient}.
\end{proof}


\subsection{Necessary Condition}
\label{sec:necessary}

\new{In this section, we show that being \cond{$(n-1)$} is a necessary condition for a deterministic type to be capable of solving $n$-process
RC.  This result holds whether the type is readable or not.
The proof uses a valency argument~\cite{FLP85}.}
Assuming an algorithm exists, the valency argument constructs an infinite execution
in which no process ever returns a value.
Unfortunately, in the case of RC, it is possible to have an infinite
execution where no process returns a value (if infinitely many crashes occur).
Thus, the proof considers a restricted set of executions where each execution must produce an output value for some process within 
a finite  number of steps, and uses this restricted set to define valency.
This technique was used by Golab \cite{Gol20} to prove a necessary condition (weaker than the 2-recording property) for solving 2-process RC.  Lo and Hadzilacos \cite{LH00} had previously used
a similar technique of defining valency using a pruned execution tree.
\new{Attiya, Ben-Baruch and Hendler \cite{ABH18} also used a valency argument in the context
of non-volatile memory in their proof that a recoverable test-and-set object cannot be built
from ordinary test-and-set objects (and registers).}

\begin{theorem}
\label{necessary}
For $n\geq 3$, if a deterministic type $T$ can be used, together with registers, to solve recoverable consensus among $n$ processes, then $T$ is \cond{$(n-1)$}.
\end{theorem}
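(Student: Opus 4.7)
The plan is to adapt the classical valency argument to the crash-recovery setting in order to extract witnesses of $(n-1)$-recording from a supposed $n$-process RC algorithm $\mathcal{A}$ that uses type $T$ and registers. Because executions in the RC model can be infinite without any decision being made (when crashes persist forever), valency must be defined on a pruned subtree. Following the spirit of Golab~\cite{Gol20} and Lo--Hadzilacos~\cite{LH00}, I would restrict attention to extensions in which a distinguished process $p_n$ remains crash-free and therefore decides in finitely many of its own steps by recoverable wait-freedom, while the other $n-1$ processes $p_1,\ldots,p_{n-1}$ are allowed to crash and recover arbitrarily. A configuration is $v$-valent if $p_n$ outputs $v$ in every such pruned extension.

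The usual hybrid argument over input vectors produces a bivalent initial configuration, and a standard critical-step argument then yields a configuration $C$ in which the next pending steps of the $n-1$ adversarial processes determine valency. Ruling out the critical step from being a read (which does not change shared state), a register access (registers have consensus number one, and local information can be wiped by a crash), or an operation on a distinct object (which commutes with the others) forces the critical object to be a single object $O$ of type $T$; the $n-1$ adversarial processes are then poised on $O$ to apply operations $op_1,\ldots,op_{n-1}$ in $C$ and split into two non-empty teams $A$ and $B$ according to the valency each $op_i$ produces. Taking $q_0$ to be $O$'s state in $C$, a cover argument yields $Q_A(q_0,op_1,\ldots,op_{n-1})\cap Q_B(q_0,op_1,\ldots,op_{n-1})=\emptyset$: any shared state $q$ in the intersection would be reachable from $q_0$ via two different schedules starting with opposite teams, after which crashing all participants erases their local memory, so $p_n$ observes the same shared state in both cases and is forced into the same decision, contradicting bivalence.

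The two remaining conditions of Definition~\ref{cond-def} are established by analogous cover arguments. If $q_0\in Q_A$ with $|B|\geq 2$, then some sequence of operations starting with a process of team $A$ returns $O$ from $q_0$ to $q_0$; executing this sequence from $C$ and then crashing all participants, then letting one $B$-process apply its $op_i$ (driving $O$ into $Q_B$) while a second $B$-process (available because $|B|\geq 2$) reads $O$ produces a scenario indistinguishable from one in which only that first $B$-process had acted on $O$, contradicting $Q_A\cap Q_B=\emptyset$. The symmetric case handles $q_0\in Q_B$ with $|A|\geq 2$. Together these arguments verify all three conditions of $(n-1)$-recording.

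The main obstacle is making the cover arguments airtight despite recoveries. A recovered adversarial process re-executes its code from the beginning and, although it has lost its local memory, might attempt to distinguish the two cover scenarios by reading shared memory other than $O$; the cover sequences must therefore be chosen so as to alter only $O$, and one must verify that no recovered process can detect which cover was applied by subsequent reads of the rest of shared memory. This bookkeeping is delicate but follows patterns established in prior valency arguments for non-volatile-memory models~\cite{ABH18,Gol20,LH00}.
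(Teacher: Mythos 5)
Your proposal has the right overall shape---a pruned valency argument, a critical configuration, teams formed by valency, and indistinguishability arguments for the three conditions of Definition~\ref{cond-def}---but the specific design of the pruned execution set is where it breaks. Allowing all of $p_1,\ldots,p_{n-1}$ to crash at arbitrary times is fatal for $n\geq 3$: a crash of $p_i$ changes no shared state and does not disturb any other process, so it commutes with the pending step of every $p_j$ with $j\neq i$. Hence at any would-be critical configuration, the (univalent) extension ``crash $p_i$'' must share its valency with $v_j$ for every $j\neq i$; using two distinct crashable indices this forces $v_1=\cdots=v_n$, making the configuration univalent. So either no critical configuration exists---and the resulting infinite multivalent execution does not contradict recoverable wait-freedom, since nothing in your pruned set forces any process to take infinitely many steps---or crashes must be rationed. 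The paper's set ${\mathcal E}_A$ does exactly this: only $p_1$ may crash, and each of its crashes must be ``paid for'' by a step of $p_2,\ldots,p_n$. This is also why the paper holds one process $p_n$ in reserve outside the teams: its operation $op_n$ is appended to the operation sequences precisely to enable the crash of $p_1$ in Lemma~\ref{same-valency}, and that reservation is the reason the conclusion is \cond{$(n-1)$} rather than \cond{$n$}. Your sketch never explains where the $n-1$ comes from, which is a sign the mechanism is off.

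Two further gaps. First, non-emptiness of the teams: you partition only the $n-1$ adversarial processes by valency, but nothing rules out $v_1=\cdots=v_{n-1}\neq v_n$, in which case one team is empty. The decider $p_n$ cannot be moved onto a team (your crash-everyone trick needs team members to be crashable), and the choice of decider fixes the pruned set before the valencies are known, so you cannot relabel your way out. The paper can, because ${\mathcal E}_A$ is symmetric in $p_2,\ldots,p_n$; its Cases 1 and 2 and properties P1/P2 exist precisely to guarantee both teams are non-empty and that every $A$-process's valency differs from $v_n$. Second, your arguments for conditions \ref{init-A-cond} and \ref{init-B-cond} do not go through as stated: having a second $B$-process ``read $O$'' yields no valency contradiction (the pending critical steps are updates, not reads), and ``contradicting $Q_A\cap Q_B=\emptyset$'' is not the relevant target. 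The paper's condition-\ref{init-A-cond} argument appends $op_n$ to the $q_0$-returning sequence so that Lemma~\ref{same-valency} gives $v_{i_1}=v_n$, contradicting P2; condition \ref{init-B-cond} requires a genuinely intricate three-case analysis over the sets $I$ and $J$ to exhibit, in each case, two schedules to which the lemma applies. None of that bookkeeping is present in, or recoverable from, the sketch as written.
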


\begin{proof}
Assume there is an algorithm $A$ for RC among $n$ processes $p_1, \ldots, p_n$ using objects of type $T$ and registers.
Let ${\mathcal E_A}$ be the set of all executions of $A$ where
$p_2, \ldots, p_n$ never crash, and
in any prefix of the execution, the number of crashes of $p_1$ is less than or equal to the total number of steps of $p_2, \ldots, p_n$.

Consider a finite execution $\gamma$ in ${\mathcal E}_A$.
Define $\gamma$ to be \emph{$v$-valent} if there is no decision 
different from $v$ in any extension of $\gamma$ in ${\mathcal E}_{A}$.
An execution $\gamma$ cannot be both $v$-valent \emph{and} $v'$-valent 
if $v\neq v'$, since a failure-free extension of $\gamma$ must eventually produce a decision.
We call $\gamma$ \emph{univalent} if it is $v$-valent for some $v$,
or \emph{multivalent} otherwise.

To see that a multivalent execution exists,
consider an execution with no steps where processes $p_1$ and $p_2$ have inputs 0
and 1.  If $p_1$ runs by itself, it must  output 0;
if $p_2$ runs by itself it must  output 1.

Next, we argue that there is a \emph{critical execution} $\gamma$, i.e., a multivalent execution
in ${\mathcal E}_A$ such that every extension of $\gamma$ in ${\mathcal E}_{A}$ is univalent.
If there were not, we could construct an infinite execution
of ${\mathcal E}_{A}$ in which every prefix is multivalent, meaning that no process ever returns a value.
Such an execution could be constructed inductively by starting with a multivalent execution and, at each step of the induction, extending it to a longer multivalent execution.
This would violate the termination property of RC, since some process takes an infinite number of steps without crashing.

For $1\leq i\leq n$, let $v_i$ be the 
value such that $\gamma$ followed by the next step of $p_i$'s 
algorithm is $v_i$-valent.
We show not all of $v_1, \ldots, v_n$ are the same.
To derive a contradiction, suppose they are all equal.
Since $\gamma$ is multivalent, some extension of $\gamma$ in ${\mathcal E}_A$ is $v'$-valent for some $v'\neq v_2$.  
\x{By assumption, the next step of each process's  algorithm produces a $v_2$-valent execution,
so the $v'$-valent extension must begin with a crash of $p_1$.}
But the extensions of $\gamma$ shown in Figure \ref{necessary-fig2}(a) are indistinguishable
to  $p_2$.
Thus, $p_2$  returns the same value in both, contradicting the
fact that one extends a $v_2$-valent execution and the other
extends a $v'$-valent execution, where $v_2\neq v'$.

\begin{figure}
\input{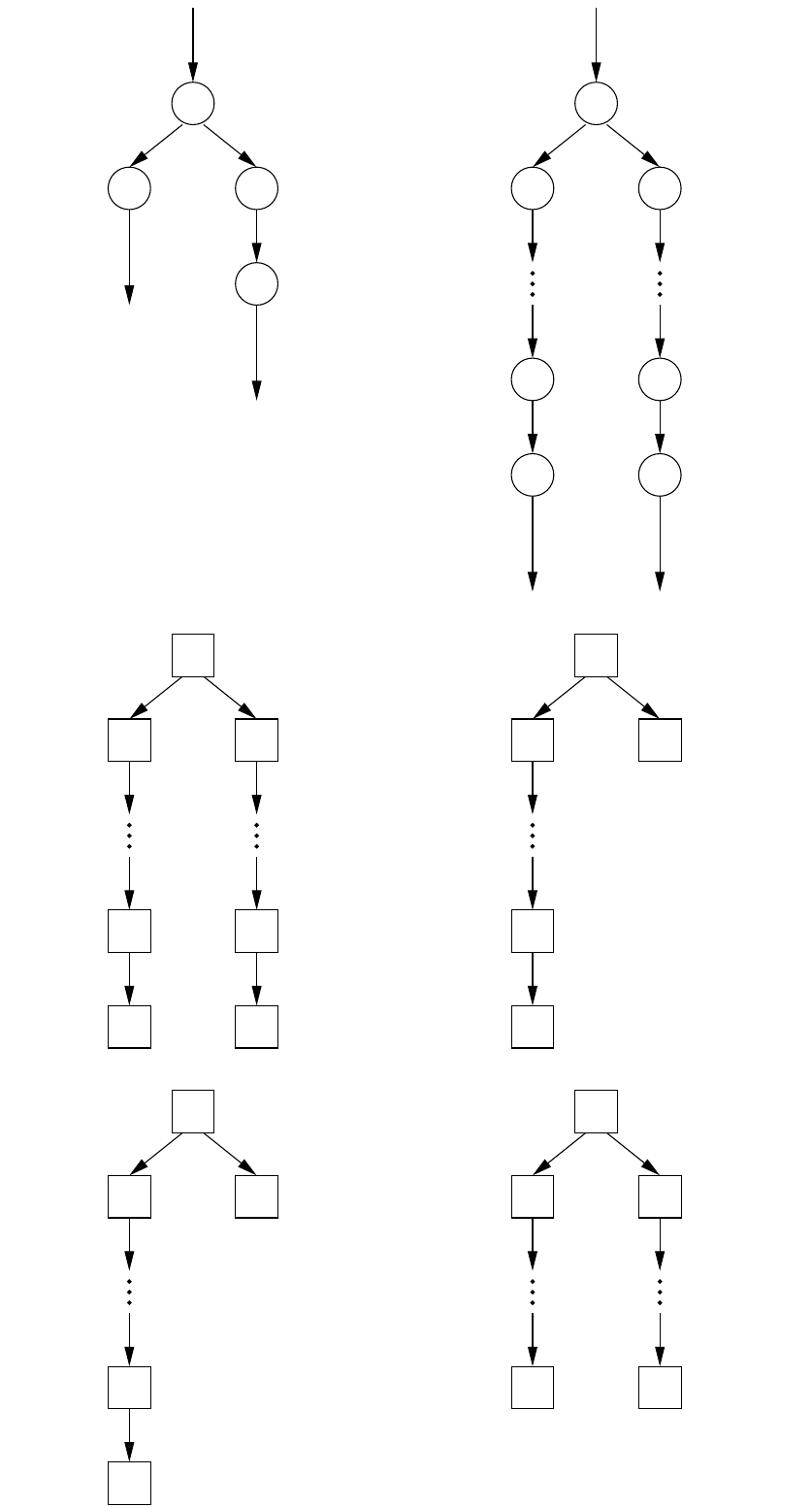_t}
\Description{This figure has 6 parts.  
Part a shows two extensions of $\gamma$ that $p_2$ cannot distinguish:  in one $p_2$ runs solo to completion, and in the other $p_1$ crashes and then $p_2$ runs solo to completion.
Part b shows two extensions of $\gamma$ that $p_1$ cannot distinguish:  in one, $p_{i_1}, \ldots, p_{i_\alpha}$ each take a step and then $p_1$ crashes, recovers and runs to completion, and in the other, $p_{j_1},\ldots, p_{j_2}$ each take a step and then $p_1$ crashes, recovers and runs to completion.
Part c shows two sequences of operations that take the object from state $q_0$ to the same state $q'$;
the sequences are $op_{i_1}, op_{i_2}, \ldots, op_{i_\alpha}, op_n$ and $op_{j_1}, op_{j_2},\ldots, op_{j_\beta}, op_n$.
Part d shows two sequences of operations on an object initially in state $q_0$;
the sequences are $op_{i_1}, op_{i_2},\ldots, op_{i_\alpha}, op_n$ and $op_n$.
Part e shows two sequences of operations on an object initially in state $q_0$;
the sequences are $op_{i_1}, op_{i_2},\ldots, op_{i_\alpha}, op_k$ and $op_k$.
Part f shows two sequences of operations that take the object from state $q_0$ back to state $q_0$;
the sequences are $op_{i_1}, op_{i_2},\ldots, op_{i_\alpha}$ and $op_{j_1}, op_{j_2},\ldots, op_{j_\beta}$.
}
\caption{Proof of Theorem \ref{necessary}.  Circles represent states of the system.  Squares represent the state of $O$.
\label{necessary-fig2}}
\end{figure}

A standard argument shows that at the end of $\gamma$, each process is about to perform
an operation on the same object $O$ of type $T$, and that step cannot be a read operation.
For $i\in \{1,\ldots,n\}$, 
let $op_i$ be the update operation that $p_i$ is poised to perform on $O$ after $\gamma$.
Let $q_0$ be the state of $O$ at the end of $\gamma$.

We next prove a technical lemma that will be used several times to complete the theorem's proof.
\new{It captures a  valency argument  we use:  
if two sequences of steps by distinct processes chosen from $p_1, \ldots, p_n$ 
after $\gamma$ can take $O$ to the same state 
\emph{and} process $p_1$ can crash after \x{both} of them, then the two extensions
must have the same valency.  To ensure that $p_1$ can crash, the hypothesis of the lemma requires that neither sequence consists of a single step
by~$p_1$.}

\begin{lemma}
\label{same-valency}
Suppose there is a sequence of distinct process ids $i_1, \ldots, i_\alpha$ 
and another sequence of distinct ids $j_1, \ldots, j_\beta$ such that
each sequence contains an element of $\{2, \ldots, n\}$ and
the sequences of operations 
$op_{i_1}, \ldots, op_{i_\alpha}$ and $op_{j_1}, \ldots, op_{j_\beta}$ both take
object $O$ from state $q_0$ to the same state $q$.
Then, $v_{i_1} = v_{j_1}$.
\end{lemma}
\begin{proof}
The two executions in Figure \ref{necessary-fig2}(b) are in ${\mathcal E}_A$ since
one of $p_2, \ldots, p_n$ takes a step in each extension of $\gamma$ before $p_1$ crashes.
$O$ is in state $q$ before $p_1$ crashes in both extensions, and no other shared object
changes between the end of $\gamma$ and the crash of $p_1$.
Thus, these two extensions are indistinguishable to the last \run\ $\phi$ of $A$ by $p_1$.
Since the left extension is $v_{i_1}$-valent and the right extension is $v_{j_1}$-valent,
we must have $v_{i_1}=v_{j_1}$.
\end{proof}

We now describe how to split $n-1$ of the processes into two teams $A$ and $B$ according to their valency 
to satisfy the definition of \cond{$(n-1)$}.
\new{The following two cases describe how to relabel the processes (if necessary) so that we can
split processes $p_1, \ldots, p_{n-1}$ into the two required teams.}

\begin{enumerate}[{Case} 1:]
\item
\label{single}
Suppose there is an $i$ such that, for all $j\neq i$, $v_i\neq v_j$.
Without loss of generality, assume that $i<n$.
(If $i=n$, we can swap the ids of $p_2$ and $p_n$ to ensure $i<n$, since $n\geq 3$.)
Let $A=\{p_i\}$ and $B=\{p_1, \ldots, p_{n-1}\}-\{p_i\}$.
\item
\label{no-singles}
Suppose that for every $i$, there is a $j\neq i$ such that $v_i=v_j$.
If there is a sequence of distinct ids $i_1, \ldots, i_\alpha$ chosen from $\{1,\ldots,n\}$ such that
the sequence of operations $op_{i_1}, \ldots, op_{i_\alpha}$
take the object $O$ from state $q_0$ back to $q_0$, then let $\ell=i_1$.
Otherwise, let $\ell$ be any id.
Without loss of generality, assume $\ell < n$.  (If this is not the case, swap the labels of processes $n-1$ and $n$ to make it true.)
Again, without loss of generality, assume $v_n \neq v_\ell$.
(Since not all of $v_1, \ldots, v_n$ are the same, there is some $\ell'$ such that 
$v_{\ell'} \neq v_\ell$.
By the assumption of Case \ref{no-singles}, we can choose such an $\ell' > 1$.
If $\ell' < n$, swap the ids of $p_{\ell'}$ and $p_n$.  This ensures that $v_n \neq v_\ell$.)

Then, define $A$ to be $\{p_i : 1\leq i\leq n-1 \mbox{ and } v_i=v_\ell\}$
and $B$ to be $\{p_i : 1\leq i \leq n-1 \mbox{ and } v_i\neq v_\ell\}$.
It follows from the fact that not all of $v_1, \ldots, v_n$ are the same and the assumption of 
Case \ref{no-singles}, that both teams are non-empty.
\end{enumerate}

It follows from the definitions of $A$ and $B$ that, in either case, \new{they form a partition of the 
processes $p_1, \ldots, p_{n-1}$ into two non-empty teams satisfying} the following properties:
\begin{enumerate}[P1:]
\item
\label{team-valencies}
$v_i \neq v_j$ for all $p_i\in A$ and $p_j\in B$, and
\item
\label{omitted-valency}
$v_i \neq v_n$ for all $p_i\in A$.
\end{enumerate}

We  check that $Q_A(q_0,op_1,\ldots,op_{n-1})$ and $Q_B(q_0,op_1,\ldots,op_{n-1})$ satisfy the definition of \cond{$(n-1)$}.

To derive a contradiction, suppose there is a state $q\in Q_A\cap Q_B$.
This means there is a sequence of distinct process ids $i_1, \ldots, i_\alpha$ chosen from $\{1,\ldots,n-1\}$ with $p_{i_1}\in A$ and
another sequence of distinct process ids $j_1, \ldots, j_\beta$ chosen from $\{1,\ldots,n-1\}$ with $p_{j_1}\in B$ such that 
the sequences $op_{i_1},\ldots, op_{i_\alpha}$ and $op_{j_1},\ldots, op_{j_\beta}$ both take
object $O$ from state $q_0$ to state $q$.
Adding one more operation $op_n$ to the end of these sequences would leave
$O$ in the same state $q'$.  (See Figure~\ref{necessary-fig2}(c).)
By Lemma \ref{same-valency}, $v_{i_1} = v_{j_1}$.
This contradicts property P\ref{team-valencies}.
Thus, condition \ref{empty-cond} of the definition of \cond{$(n-1)$}~holds.

To derive a contradiction, suppose $q_0\in Q_A$.
Then, there is a sequence of distinct process ids $i_1, \ldots, i_\alpha$ chosen from $\{1,\ldots,n-1\}$
with $p_{i_1}\in A$ such that
the sequence of operations $op_{i_1}, \ldots, op_{i_\alpha}$ takes
object $O$ from state $q_0$ back to state~$q_0$.
The two sequences of operations on $O$ shown in Figure \ref{necessary-fig2}(d) both leave
$O$ in the same state.
Thus, $v_{i_1} = v_n$,
by Lemma \ref{same-valency},  contradicting property P\ref{omitted-valency}.
Thus, condition \ref{init-A-cond} of the definition of \cond{$(n-1)$} is satisfied.

To derive a contradiction, suppose $q_0 \in Q_B$ and $|A|>1$.
Since $|A|>1$, the teams must have been defined according to Case \ref{no-singles}.
Since $q_0\in Q_B$, there is a sequence of distinct process ids 
$j_1, \ldots, j_\beta$ chosen from $\{1, \ldots, n-1\}$ with $p_{j_1}\in B$ such that $op_{j_1}, \ldots, op_{j_\beta}$ takes 
object $O$ from state $q_0$ back to $q_0$.
\x{So, in Case \ref{no-singles} of the definition of the teams,
we chose $\ell=i_1$, where $i_1,\ldots,i_\alpha$ is some sequence of distinct process ids chosen from
$\{1,\ldots,n\}$ such that 
$op_{i_1}, \ldots, op_{i_\alpha}$  also takes object $O$ from state $q_0$ back to $q_0$.}
Since $i_1=\ell\leq n-1$, we have $p_{i_1}\in A$.
(We remark that  this sequence's existence does not contradict the fact proved above that $q_0\notin Q_A\x{(q_0,op_1,\ldots,op_{n-1})}$, since this sequence may include the index~$n$.)

Our goal is to show that $v_{i_1}=v_{j_1}$, which will contradict property P\ref{team-valencies}.
We use a case argument, showing that it is possible to apply Lemma \ref{same-valency} in each case.
Let $I =\{k : 2\leq k\leq n \mbox{ and } v_k = v_{i_1}\}$ and let
$J = \{k : 2\leq k \leq n \mbox{ and } v_k = v_{j_1}\}$.
\new{A step by a process whose index is in $I$ or $J$ extends the critical execution $\gamma$
to a $v_{i_1}$- or $v_{j_1}$-valent execution, respectively.  Moreover, a step by any process
in $I$ or $J$ allows
us to invoke Lemma \ref{same-valency} since the sets $I$ and $J$ do not include~1.}

\begin{enumerate}[{Case} a:]
\item
\label{J-not-used}
Suppose  some $k\in J$  does not appear in $i_1, \ldots, i_\alpha$.
Then, the two sequences of operations on $O$  in Figure \ref{necessary-fig2}(e)
leave $O$ in the same state.
Since $k\geq 2$,  Lemma \ref{same-valency} implies that $v_{i_1} = v_k$.
By definition of $J$, $v_k=v_{j_1}$.  Thus,~$v_{i_1}=v_{j_1}$.
\item
Suppose there is some $k\in I$ that does not appear in $j_1, \ldots, j_\beta$.
By an argument symmetric to Case \ref{J-not-used}, $v_{i_1}=v_{j_1}$.
\item
Suppose $J\subseteq \{i_1, \ldots, i_\alpha\}$ and $I\subseteq \{j_1, \ldots, j_\beta\}$.
We first argue that $I$ is non-empty.
If $i_1 > 1$, then $i_1\in I$.
Otherwise, $i_1=1$ and by the assumption of Case 2, there is some other process id $k$
such that $v_k=v_{i_1}$ and this $k$ is in $I$.
A symmetric argument can be used to show that $J$ is non-empty.
Thus, both of the sequences $i_1, \ldots, i_\alpha$ and $j_1, \ldots, j_\beta$
contain at least one of the ids in $\{2, \ldots, n\}$.
Since both sequences of operations shown in Figure \ref{necessary-fig2}(f) 
leave $O$ in the same state $q_0$, it follows from Lemma \ref{same-valency} that $v_{i_1} = v_{j_1}$.
\end{enumerate}
In all three cases, $v_{i_1} = v_{j_1}$, contradicting Property P\ref{team-valencies}.
Thus, condition \ref{init-B-cond}  of the definition of \cond{$(n-1)$} holds.
\end{proof}

In proving that $T$ is \cond{$(n-1)$}, we split $n-1$ of the processes
into two teams according to the valency induced by their next step after the critical execution
and assigned each process the operation they perform in this  step.
To show that these choices satisfy the definition of \cond{$(n-1)$}, it was essential
to have one process $p_n$ ``in reserve'' that we could use to take one step in Figures 
\ref{necessary-fig2}(c) and \ref{necessary-fig2}(d).  This step enables the crash of $p_1$
needed to prove Lemma \ref{same-valency}, which shows that the two executions in those
figures lead to the same outcome, thereby deriving the necessary contradiction.
This is the reason we show that being \cond{$(n-1)$} (rather than \cond{$n$}) is necessary for solving RC.


\subsection{Relationship Between Consensus and Recoverable Consensus}
\label{sec:relationship}

Next, we prove a relationship between the characterizations of  types that  solve
consensus and those that  solve~RC.

\begin{theorem}
\label{discerning}
For $n\geq 4$, if a deterministic type $T$ is $n$-discerning, then it is \cond{$(n-2)$}.
\end{theorem}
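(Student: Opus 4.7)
The plan is to construct a witness for \cond{$(n-2)$} directly from the given $n$-discerning witness $(q_0, A, B, op_1, \ldots, op_n)$ by dropping two of the $n$ processes and using their associated operations as ``probes.''

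The central tool is the following \emph{probe argument}. Suppose we drop a process $p_j$. For any sequence $\sigma$ of distinct operations chosen from $\{op_i : i \neq j\}$, appending $op_j$ yields a pair (response of $op_j$, resulting state). By determinism of $T$, this pair is fully determined by the state reached by $\sigma$. If $\sigma$ starts with a team $A$ element, the resulting pair lies in $R_{A,j}$; symmetrically for $B$. So the $n$-discerning hypothesis $R_{A,j} \cap R_{B,j} = \emptyset$ forces the pre-$op_j$ states to be team-separated. In particular, this shows that no state reachable from $q_0$ by a sequence of ops from $\{op_i : i \neq j\}$ starting in one team is also reachable by such a sequence starting in the other.

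First I would handle the case when both teams have at least two processes. Pick $p_a \in A$ and $p_b \in B$, and propose the witness $(q_0,\, A \setminus \{p_a\},\, B \setminus \{p_b\},\, (op_i)_{i \notin \{a,b\}})$. Property~\ref{empty-cond} ($Q_{A''} \cap Q_{B''} = \emptyset$) follows from the probe argument with $j = a$ applied to the restricted set of operations. Property~\ref{init-A-cond} ($q_0 \notin Q_{A''}$): if some sequence using operations from $\{op_i : i \notin \{a,b\}\}$ starting with a team~$A$ element returned to $q_0$, then appending $op_b$ would produce a pair in $R_{A,b}$, while applying $op_b$ directly to $q_0$ produces the same pair, which lies in $R_{B,b}$, contradicting $n$-discerning. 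Property~\ref{init-B-cond} is symmetric, using the probe $j = a$.

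In the remaining case, one team has size $1$, say $|A|=1$. Since $n\geq 4$, $|B|\geq 3$, so I would drop two processes $p_a, p_b$ from $B$. The new teams have sizes $1$ and $n-3\geq 1$. Properties~\ref{empty-cond} and~\ref{init-B-cond} follow from the probe argument with probe $op_a$ (or $op_b$) exactly as above; Property~\ref{init-A-cond} holds trivially since $|B''|=n-3$ is irrelevant---what matters is that $|A''|=1$ would instead satisfy Property~\ref{init-A-cond} automatically if needed; in fact, the same probe argument gives $q_0 \notin Q_{A''}$ outright. The case $|B|=1$ is handled symmetrically by swapping the roles of $A$ and $B$.

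The main conceptual step to get right is the probe argument, which is the bridge between the response-sensitive $n$-discerning hypothesis and the state-only \cond{$(n-2)$} conclusion. Dropping \emph{two} processes (rather than one) is what allows us to handle Properties~\ref{init-A-cond} and~\ref{init-B-cond} symmetrically, using one probe from each team; with only one process dropped, we could discharge just one of these two conditions, which is precisely why the theorem concludes \cond{$(n-2)$} rather than \cond{$(n-1)$}.
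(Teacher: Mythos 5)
Your proposal is correct and is essentially the paper's own proof: the paper also drops two processes (one from each team when both have size at least two, otherwise two from the larger team) and uses the dropped processes' operations as probes, appending $op_j$ for a dropped $p_j$ to convert a coincidence of states into a pair lying in $R_{A,j}\cap R_{B,j}$, contradicting $n$-discerning. The only blemish is a harmless label swap in your $|A|=1$ case (it is condition~\ref{init-B-cond} that holds trivially because $|A''|=1$, while condition~\ref{init-A-cond} needs the probe to show $q_0\notin Q_{A''}$), but both required facts are present and correctly justified in your argument.
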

\begin{proof}
Let $q_0, A, B, op_1, \ldots, op_n$ be chosen to satisfy the definition of $n$-discerning.
Without loss of generality, assume that $\{p_1, \ldots, p_{n-2}\}$ includes at least one process
from each of  $A$ and $B$, and that $\{p_{n-1}, p_n\}$ includes at least one process from 
each team that contains more than one process.
(The ids of the processes can be permuted to make this true.)
We partition the processes $\{p_1, \ldots, p_{n-2}\}$ into two non-empty teams
$A' = A \cap \{p_1, \ldots, p_{n-2}\}$ and
$B' = B \cap \{p_1, \ldots, p_{n-2}\}$.

We show that $Q_{A'}(q_0,op_1,\ldots,op_{n-2})$ and $Q_{B'}(q_0,op_1,\ldots,op_{n-2})$ satisfy the definition of \cond{$(n-2)$}.

To derive a contradiction, assume $Q_{A'} \cap Q_{B'}$ contains some state~$q$.
Then, there are sequences $i_1,\ldots,i_\alpha$ and $j_1,\ldots,j_\beta$, each of distinct ids  from $\{1,\ldots,n-2\}$, such that $p_{i_1}\in A$, $p_{j_1}\in B$ and the sequences
$op_{i_1},\ldots,op_{i_\alpha}$ and $op_{j_1},\ldots,op_{j_\beta}$ both take an object of type $T$ from
state $q_0$ to  $q$.
Operation $op_n$ takes the object from state $q$ to some state $q'$ and returns some response $r$.
By adding $op_n$ to the end of each of the two sequences,
we see the pair $(r,q')$ is in both $R_{A,n}$ and $R_{B,n}$
in the definition of $n$-discerning, a contradiction.
Thus, condition \ref{empty-cond} of the definition of \cond{$(n-2)$} is satisfied.

To derive a contradiction, assume $q_0\in Q_{A'}$ and $|B'|>1$.
Since $|B| \geq |B'| > 1$, some process $p_j$ is in $B \cap \{p_{n-1},p_n\}$.
Operation $op_j$ takes an object of type $T$ from $q_0$ to some state $q$ and returns some response $r$.
Thus, $(r,q)$ is in the set $R_{B,j}$ of the definition of $n$-discerning.
Since $q_0\in Q_{A'}$, there is a sequence $i_1,\ldots,i_\alpha$ of distinct ids chosen from $\{1,\ldots,n-2\}$ such that $p_{i_1}\in A$ and the sequence
$op_{i_1},\ldots,op_{i_\alpha}$ takes an object of type $T$ from
state $q_0$ back to the state~$q_0$.
By adding $op_j$ to the end of this sequence, we see that the pair $(r,q)$ is also in $R_{A,j}$,
contradicting the fact that $R_{A,j}\cap R_{B,j}$ must be empty, according to the definition of $n$-discerning.
Thus, condition \ref{init-A-cond} of the definition of \cond{$(n-2)$} is satisfied.

The proof of condition \ref{init-B-cond} 
is symmetric.
\end{proof}

\begin{corollary}
\label{drop-by-2}
A deterministic, readable object type $T$ with consensus number \x{at least} $n$
can solve recoverable consensus among $n-2$ processes.
Thus, $cons(T) -2 \leq rcons(T) \leq cons(T)$.   
\end{corollary}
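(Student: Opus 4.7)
The plan is to obtain Corollary \ref{drop-by-2} by chaining together the three characterization results already established in this section, together with a trivial observation for the upper bound. I expect no deep new argument is required; the work is to string the implications in the correct order and handle small values of $n$.

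For the upper bound $rcons(T)\leq cons(T)$, I would observe that any algorithm solving recoverable consensus among $n$ processes also solves standard wait-free consensus among $n$ processes: simply consider executions in which no process ever crashes. In a crash-free execution, recoverable wait-freedom reduces to ordinary wait-freedom, while the agreement and validity properties of RC imply those of consensus directly. Hence if $T$ solves $n$-process RC, it solves $n$-process consensus, and therefore $rcons(T)\leq cons(T)$.

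For the lower bound, suppose $cons(T)\geq n$ with $n\geq 4$. By Theorem \ref{cons-char}, the readable type $T$ must be $n$-discerning. Applying Theorem \ref{discerning} then gives that $T$ is \cond{$(n-2)$}. Finally, Theorem \ref{sufficient} provides an algorithm, using objects of type $T$ together with registers, that solves recoverable consensus among $n-2$ processes, so $rcons(T)\geq n-2 \geq cons(T)-2$.

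The only remaining step is to deal with the boundary cases $n\in\{1,2,3\}$, where Theorem \ref{discerning} does not apply because it requires $n\geq 4$. In those cases $cons(T)-2\leq 1$, and recoverable consensus among a single process (or zero) is trivially solvable: the lone process simply outputs its own input, which satisfies agreement, validity, and recoverable wait-freedom without using any shared objects. This degenerate case absorbs every small value of $cons(T)$ for which the main chain of implications is not directly available, completing the proof of the corollary. The main ``obstacle,'' if any, is just being careful that the $n\geq 4$ hypothesis in Theorem \ref{discerning} does not leave a gap, which this boundary argument closes.
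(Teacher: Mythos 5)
Your proposal is correct and follows essentially the same route as the paper: the upper bound comes from observing that an RC algorithm solves ordinary consensus, and the lower bound chains Theorem \ref{cons-char} ($cons(T)\geq n$ implies $n$-discerning), Theorem \ref{discerning} ($n$-discerning implies \cond{$(n-2)$}), and Theorem \ref{sufficient} (\cond{$(n-2)$} implies $(n-2)$-process RC is solvable). Your explicit treatment of the boundary cases $n\leq 3$, where $cons(T)-2\leq 1$ and one-process RC is trivial, is a reasonable addition that the paper leaves implicit.
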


The first inequality in the  corollary is a consequence of Theorem \ref{sufficient} and \ref{discerning}.  The second inequality follows from the fact that any algorithm that solves
RC is also an algorithm that solves consensus.

For  $n=3$, we can strengthen Theorem \ref{discerning} and Corollary \ref{drop-by-2} as follows.  See \arxcam{Appendix \ref{3dis2rec-proof}}{the full version \cite{full}} for the proof.
\begin{proposition}
\label{3dis2rec}
If a deterministic, readable type is 3-discerning, then it is \cond{2}.
Thus, if $cons(T)=3$ then $2\leq rcons(T)\leq 3$.
\end{proposition}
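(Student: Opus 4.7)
My plan is to establish the first sentence of the proposition directly; the second sentence will then follow by combining Theorem \ref{cons-char} (which turns $cons(T)=3$ into the 3-discerning property), the first sentence (to get 2-recording), and Theorem \ref{sufficient} (to conclude $rcons(T)\geq 2$), together with the trivial bound $rcons(T)\leq cons(T)$ that holds because any RC algorithm in particular solves consensus in a failure-free run.

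To prove that a deterministic 3-discerning type $T$ is 2-recording, I will fix witnesses $q_0, A, B, op_1, op_2, op_3$ for 3-discerning. Since $|A|+|B|=3$, one team has size 1 and the other size 2, and after relabelling I may assume $A=\{p_1\}$ and $B=\{p_2, p_3\}$. My candidate 2-recording structure will reuse $q_0$, take teams $A' = \{p_1\}$ and $B' = \{p_2\}$, and use operations $op_1, op_2$. Because $|A'|=|B'|=1$, conditions \ref{init-A-cond} and \ref{init-B-cond} of Definition \ref{cond-def} are vacuous, so the entire task reduces to showing $Q_{A'}(q_0, op_1, op_2)\cap Q_{B'}(q_0, op_1, op_2) = \emptyset$. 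Writing $q_1, q_2, q_{12}, q_{21}$ for the states reached from $q_0$ by $op_1$, $op_2$, $op_1 op_2$ and $op_2 op_1$ respectively, this amounts to ruling out the four possible collisions $q_1 = q_2$, $q_1 = q_{21}$, $q_{12} = q_2$, and $q_{12} = q_{21}$.

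The central idea will be to reserve $op_3$ as a distinguishing probe: whenever two sequences leave the object in the same state, appending $op_3$ to each keeps them in that shared state and yields the same response, by determinism. For each putative collision I will pair a sequence starting with $op_1$ (first index in $A$, so the resulting $(r,q)$ pair lies in $R_{A,3}$) with a sequence starting with $op_2$ (first index in $B$, placing the pair in $R_{B,3}$), each with $op_3$ appended so that index 3 appears. Concretely, the four cases are handled by the sequence pairs $op_1 op_3$ vs $op_2 op_3$, $op_1 op_3$ vs $op_2 op_1 op_3$, $op_1 op_2 op_3$ vs $op_2 op_3$, and $op_1 op_2 op_3$ vs $op_2 op_1 op_3$. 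Determinism then forces a common $(r,q)$ into $R_{A,3}\cap R_{B,3}$, contradicting 3-discerning with $j=3$. The only bookkeeping to check is that each chosen sequence satisfies the formal requirements of $R_{X,j}$ — distinct indices, first index from the correct team, and index $3$ present — which is immediate from the partition. I do not foresee any deeper obstacle; notably, readability of $T$ is never used in the first sentence, and enters only through Theorem \ref{sufficient} in the second.
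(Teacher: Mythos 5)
Your proof is correct and follows essentially the same route as the paper's: reuse $q_0$, take the singleton teams $A'=\{p_1\}$, $B'=\{p_2\}$ so that conditions \ref{init-A-cond} and \ref{init-B-cond} are vacuous, and refute any collision in $Q_{A'}\cap Q_{B'}$ by appending $op_3$ to both sequences to produce a pair in $R_{A,3}\cap R_{B,3}$. The only difference is cosmetic: you enumerate the four collisions explicitly where the paper treats them in one stroke, and your remark that readability is not needed for the first sentence is accurate.
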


The following example shows that Theorem \ref{discerning} cannot be strengthened when $n>3$.

\newcommand{\tA}{\mathbb{A}}
\newcommand{\tB}{\mathbb{B}}

\begin{proposition}
\label{discerning-ceg}
For all $n\geq 4$, there is a type that is $n$-discerning, but is not \cond{$(n-1)$}.
\end{proposition}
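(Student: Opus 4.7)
For each $n \geq 4$, my plan is to exhibit an explicit deterministic readable type $T_n$ and verify the two halves of the claim directly. The construction I would attempt equips $T_n$ with operations $\mathit{op}_1, \dots, \mathit{op}_n$ whose responses always return the current identity $f$ of the first operator, and whose state records a participation set $P \subseteq \{1, \dots, n\}$ together with bookkeeping rules that cause the state to ``collapse'' (discarding team-discriminating information about $f$) exactly when $P = \{1, \dots, n\}$.  The design intent is that $f$ is exposed through each operation's \emph{response} throughout the execution, but the state itself stops carrying $f$ once all $n$ distinct process indices have participated, mirroring the role the extra process $p_n$ plays in the proof of Theorem~\ref{discerning}.

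To verify that $T_n$ is $n$-discerning, I would partition $\{p_1, \dots, p_n\}$ into any two non-empty teams $A$ and $B$ and let $p_i$ perform $\mathit{op}_i$.  Each response equals $f$, which lies in $A$ iff the sequence started with an $A$-process; hence $R_{A, j}$ and $R_{B, j}$ are disjoint in their response coordinate for every $j$.  For the failure of \cond{$(n-1)$}, I would carry out a case analysis over partitions of the $n - 1$ processes into non-empty teams $A', B'$ and over op-assignments for those processes.  Since the collapse trigger $P = \{1, \dots, n\}$ requires all $n$ distinct indices and is unreachable with only $n - 1$ processes, every attempt to use the state to discriminate teams must go through the residual structure.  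The construction is to be arranged so that in each case either the reachable end-states collide, forcing $Q_{A'} \cap Q_{B'} \ne \emptyset$ and violating condition~\ref{empty-cond}, or the available operations permit a return to $q_0$ from sequences starting on each team, so that conditions~\ref{init-A-cond} and~\ref{init-B-cond} cannot both hold when both teams have size at least~$2$ (which is unavoidable for partitions of $n-1 \ge 3$ processes).

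\textbf{Main obstacle.} The delicate technical point is designing $T_n$'s state transitions so that $n$-discerning is preserved (which wants the response to each operation to reveal $f$, hence requires enough internal structure) while simultaneously precluding \cond{$(n-1)$} for every possible $(n-1)$-process choice of partition and operations (which wants the state, viewed abstractly, to be symmetric enough across teams that no op-choice recovers state-level discernment).  The crux is calibrating the collapse rule and the operation repertoire so that the adversary choosing the $(n-1)$-process partition and ops cannot avoid triggering one of the three failure modes above---exploiting precisely the extra process available for $n$-discerning but missing for \cond{$(n-1)$}, in the same spirit as the ``witness'' step $p_n$ played in the proof of Theorem~\ref{discerning}.
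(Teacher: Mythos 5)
There is a genuine gap, and it sits exactly at the point you flag as the crux. Your collapse rule fires only when the participation set reaches all of $\{1,\dots,n\}$, and you argue that this trigger is unreachable with $n-1$ processes. But unreachability of the collapse is precisely what makes your type \cond{$(n-1)$} rather than not. For an operation's response to report the first operator $f$, the state just before that operation must determine $f$, because a deterministic type's response is a function of the current state and the operation alone. Hence in any execution that never collapses, every state reached by a nonempty sequence encodes $f$. Now take any partition of the $n-1$ processes into non-empty teams $A'$ and $B'$ with their assigned operations: every state in $Q_{A'}$ records a first operator in $A'$, every state in $Q_{B'}$ records one in $B'$, so $Q_{A'}\cap Q_{B'}=\emptyset$, and $q_0$ (which records no first operator) lies in neither set. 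All three conditions of Definition \ref{cond-def} are satisfied, so your type \emph{is} \cond{$(n-1)$} --- the opposite of what the proposition requires. Your fallback ("either the end-states collide, or the operations permit a return to $q_0$") cannot be arranged under the stated collapse rule, since both failure modes require the state either to forget $f$ or to return to $q_0$, and you have made both impossible with fewer than $n$ participants.

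The repair is to make the collapse \emph{reachable} with $n-1$ processes yet harmless for one carefully chosen $n$-process partition. The paper's $T_n$ does this by counting applications of just two operations $op_{\tA}$ and $op_{\tB}$ and resetting to $q_0$ once $op_{\tA}$ has been applied $\floor{n/2}$ times or $op_{\tB}$ has been applied $\ceil{n/2}$ times after the first update fixed the winner. For $n$-discerning one takes teams of sizes exactly $\floor{n/2}$ and $\ceil{n/2}$; since each process acts at most once, a reset can occur only after the entire opposing team has acted, so every response still reports the original winner. For $n-1$ processes, condition \ref{empty-cond} forces opposite teams to use different operations, and pigeonhole forces $|A|\geq\floor{n/2}$ or $|B|\geq\ceil{n/2}$; in either case one step by the small team followed by all steps of the large team returns the object to $q_0$, violating condition \ref{init-A-cond} or \ref{init-B-cond} because the opposing team has size at least $2$ when $n\geq 4$. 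Note also a point your sketch leaves open even for the first half: after a collapse and a fresh first update the response reports the \emph{new} winner, so $n$-discerning survives only because the thresholds are calibrated to rule out an early collapse.
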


A complete proof is in \arxcam{Appendix \ref{Tn-proof}}{\cite{full}}.  We sketch it here.
We define a type $T_n$ whose set of  states is 
$\{(winner,row,col): winner \in \{\tA,\tB\}, 0 \leq row < \ceil{n/2},  0 \leq col < \floor{n/2}\} \cup \{(\bot,0,0)\}$.
$T_n$ has two operations $op_\tA$ and $op_\tB$, and a read operation.
Intuitively, if the object is initialized to $(\bot,0,0)$, $winner$ keeps track of whether
the first update was $op_\tA$ or $op_\tB$, while $col$ and $row$ store the number of times
$op_\tA$ and $op_\tB$ have been applied.  If $op_\tA$ is performed more than $\floor{n/2}$
times or $op_\tB$ is performed more than $\ceil{n/2}$ times,  the object ``forgets'' all the
information it has stored by going back to state $(\bot,0,0)$.
It is easy to verify that $T_n$ is $n$-discerning but not \cond{$(n-1)$}.

It follows easily from Proposition \ref{discerning-ceg} combined with Theorems \ref{cons-char} and \ref{necessary} that there are readable types whose RC numbers are strictly smaller than their consensus numbers.
\begin{corollary}
For all $n\geq 4$, there is a deterministic, readable type $T_n$ such that $rcons(T_n) < cons(T_n) =n$.
\end{corollary}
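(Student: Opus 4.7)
The plan is to chain Proposition~\ref{discerning-ceg} with the characterizations in Theorems~\ref{cons-char} and~\ref{necessary}. Fix $n\geq 4$ and let $T_n$ be the type furnished by Proposition~\ref{discerning-ceg}, so that $T_n$ is deterministic, readable, $n$-discerning, and not \cond{$(n-1)$}.

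First, from $n$-discerning and the forward direction of Theorem~\ref{cons-char}, I conclude that $T_n$ together with registers solves $n$-process wait-free consensus, so $cons(T_n) \geq n$. The matching upper bound $cons(T_n) \leq n$ follows by verifying directly from the construction of $T_n$ that it is not $(n+1)$-discerning, and then invoking the reverse direction of Theorem~\ref{cons-char}. Intuitively, with only $\ceil{n/2}$ row-values and $\floor{n/2}$ column-values available, any partition of $n+1$ processes into non-empty teams $A$ and $B$ forces at least one team to be able to apply enough operations to trigger the wrap-around to $(\bot,0,0)$; by picking two opposite-team sequences of operations that both drive the object to $(\bot,0,0)$ and then appending a common further operation, one exhibits a pair $(r,q)$ that lies in both $R_{A,j}$ and $R_{B,j}$, violating the definition of $(n+1)$-discerning.

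Next, to bound $rcons(T_n)$, I apply Theorem~\ref{necessary} contrapositively. Since $n \geq 4 \geq 3$, if $T_n$ together with registers could solve RC among $n$ processes, then $T_n$ would have to be \cond{$(n-1)$}, contradicting Proposition~\ref{discerning-ceg}. Hence $rcons(T_n) \leq n-1 < n = cons(T_n)$, which is the claim of the corollary.

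The only non-trivial step is confirming that $T_n$ is not $(n+1)$-discerning; this amounts to a short case analysis of how the sizes of the two teams interact with the wrap-around thresholds $\ceil{n/2}$ and $\floor{n/2}$, paralleling the same pigeonhole reasoning already used in Proposition~\ref{discerning-ceg} to rule out \cond{$(n-1)$}. Everything else is a direct chaining of already-proven theorems.
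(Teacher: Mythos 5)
Your proof is correct and follows essentially the chain the paper intends: Proposition~\ref{discerning-ceg} supplies the example, Theorem~\ref{cons-char} gives $cons(T_n)\geq n$, and Theorem~\ref{necessary} (contrapositively) gives $rcons(T_n)\leq n-1$. The one place you do more work than necessary is the upper bound $cons(T_n)\leq n$: instead of directly verifying that $T_n$ is not $(n+1)$-discerning via a pigeonhole case analysis, you can obtain it immediately from Theorem~\ref{discerning} --- since $n+1\geq 4$, if $T_n$ were $(n+1)$-discerning it would be \cond{$(n-1)$}, contradicting Proposition~\ref{discerning-ceg} --- although your direct argument is also sound.
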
 

On the other hand, there are also  types whose RC numbers are equal to their consensus numbers.  The next proposition also shows that every level of the RC hierarchy is populated, since there are types with consensus number $n$ for all $n$.

\begin{proposition}
\label{same-number}
For all $n$, there is a deterministic, readable type $S_n$ such that $rcons(S_n) = cons(S_n) = n$.
\end{proposition}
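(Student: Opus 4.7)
The plan is to produce, for each $n \geq 1$, a deterministic readable type $S_n$ and verify two things: (i) $S_n$ is \cond{$n$}, and (ii) $S_n$ is not $(n+1)$-discerning. By Theorem \ref{sufficient}, (i) yields $rcons(S_n) \geq n$; since being \cond{$n$} implies being $n$-discerning, Theorem \ref{cons-char} also yields $cons(S_n) \geq n$; and (ii) yields $cons(S_n) \leq n$ via Theorem \ref{cons-char}. Together with the trivial bound $rcons(S_n) \leq cons(S_n)$, this will give $rcons(S_n) = cons(S_n) = n$.

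For $n = 1$, take $S_1$ to be a read/write register. For $n = 2$, take $S_2$ to be a readable FIFO queue: with $A = \{p_1\}$, $B = \{p_2\}$, $q_0$ the empty queue, and $op_i = \op{Enqueue}(i)$, the set $Q_A$ consists of queue contents beginning with~$1$ while $Q_B$ consists of queue contents beginning with~$2$, so $Q_A \cap Q_B = \emptyset$ and $q_0 \notin Q_A \cup Q_B$, verifying Definition \ref{cond-def}; and the consensus number of a queue is classically $2$~\cite{Her91}, giving (ii).

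For $n \geq 3$, I would construct $S_n$ explicitly as a tailored state machine whose state space contains a distinguished initial state $q_0$ together with two disjoint subsets $Q_A, Q_B$ of ``post-first-update'' states. The $n$ chosen operations $op_1, \ldots, op_n$ are to be designed so that the first update on $q_0$ drives the state into $Q_A$ or $Q_B$ according to which side of a pre-chosen partition of $\{p_1, \ldots, p_n\}$ into non-empty teams $A, B$ performed it, while all later updates keep the state on its current side. This immediately verifies \cond{$n$} and hence (i).

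The main obstacle is (ii), the upper bound $cons(S_n) \leq n$. Its proof requires a case analysis over every initial state $q_0'$, every partition of the $n+1$ processes into non-empty teams $A', B'$, and every operation assignment $op'_1, \ldots, op'_{n+1}$, exhibiting some index $j$ with $R_{A', j} \cap R_{B', j} \neq \emptyset$. The type $S_n$ must be engineered so that its recording mechanism admits only $n$ distinguishable ``first-updater'' inputs, forcing two of the $n+1$ operations to coincide in effect by pigeonhole; that coincidence is then exploited to produce a common $(r, q)$ pair witnessing the failure of $(n+1)$-discerning. By contrast, verifying (i) reduces to a direct check from Definition \ref{cond-def} once the state space is designed to make the first-updater information permanent.
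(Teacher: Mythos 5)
Your top-level strategy is the paper's: exhibit $S_n$, prove it is \cond{$n$} to get $rcons(S_n)\geq n$ via Theorem \ref{sufficient}, and prove it is not $(n+1)$-discerning to get $cons(S_n)\leq n$ via Theorem \ref{cons-char}. But the proposal has two genuine defects. First, the $n=2$ witness is wrong. A \emph{readable} FIFO queue is not an ordinary queue: in an enqueue-only execution the first enqueued element is never removed, and a \op{Read} exposes it, so with $op_i=\op{Enqueue}(\mbox{team of } p_i)$ the head of the queue permanently records which team updated first. This makes the readable queue $m$-discerning (indeed \cond{$m$}) for every $m$; it is essentially Herlihy's augmented queue, whose consensus number is $\infty$. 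The bound $cons(queue)=2$ you cite applies only to the queue \emph{without} a read operation, so your $S_2$ does not satisfy $cons(S_2)=2$.

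Second, for $n\geq 3$ you give a design brief rather than a construction, and the brief points the wrong way. You stipulate that ``all later updates keep the state on its current side,'' i.e., that the first-updater information is recorded permanently. For a readable type, permanent recording yields $m$-discerning for all $m$ and hence $cons(S_n)=\infty$ --- exactly the failure of your queue example. The entire difficulty of the proposition is the tension you never resolve: the object must remember the winner long enough to be \cond{$n$} yet \emph{forget} it soon enough to fail $(n+1)$-discerning. The paper's $S_n$ achieves this with states $(winner,row)$, $0\leq row<n$, where $op_{\tA}$ records $\tA$ only from the initial state and otherwise resets, and $op_{\tB}$ wraps around after $n$ applications; the non-$(n+1)$-discerning argument then pins down $q_0$ and the operation assignment, forces $|A|=1$ and $|B|=n$, and exhibits two sequences (all of $B$ followed by $A$, versus $A$ alone) reaching the same state with the same response. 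Your pigeonhole idea --- that $n+1$ processes must reuse one of only $n$ ``distinguishable'' operations --- does not by itself produce a violation of Definition of $n$-discerning: two same-team processes may share an operation harmlessly, and even two opposite-team processes sharing an operation need not yield a common pair in $R_{A,j}\cap R_{B,j}$, since those sets are defined by sequences that must \emph{contain} $j$ and \emph{start} in the given team. As it stands, the proposal does not establish the existence claim for any $n\geq 2$.
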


A complete proof is in \arxcam{Appendix \ref{Sn-proof}}{the full version \cite{full}}.  We sketch it here.
We define a type $S_n$ whose set of possible states is
$\{(winner,row) : winner \in \{\tA,\tB\}, 0\leq row < n\}$.
$S_n$ has two operations $op_\tA$ and $op_\tB$, and a read operation.
Intuitively, if the object is initialized to $(\tB,0)$, and then accessed by update operations,
$winner$ records whether the first 
update was $op_\tA$ or $op_\tB$ and $row$ counts the number of times $op_\tB$ has been applied.  If $op_\tA$ is performed more than once or if $op_\tB$ is performed more than $n-1$ times, then the object
``forgets'' all the information it has stored by going back to state $(\tB,0)$.
It is fairly straightforward to check that $S_n$ is \cond{$n$}, but is
not $(n+1)$-discerning.  Thus, $n\leq rcons(S_n) \leq cons(S_n) \leq n$.

\subsection{Recoverable Consensus Using Several Types}
\label{sec:robustness}

The (recoverable) consensus number
of a set ${\mathcal T}$ of object types is the maximum number of processes that can
solve (recoverable) consensus using objects of those types, together with registers
(or $\infty$ if there is no such maximum).
A classic open question, originally formulated by Jayanti \cite{Jay97}, is whether the standard consensus hierarchy is robust
for deterministic types,
i.e., whether $cons({\mathcal T})=\max\{cons(T) : T\in{\mathcal T}\}$.
If \arxcam{this equation holds}{so}, it is possible to study the power of a system
equipped with multiple types by studying the power of each type individually.
See \cite[Section~9]{FR03} for some history of the robustness question.
Ruppert's characterization  (Theorem \ref{cons-char})
was used to show the consensus hierarchy is robust for the class of deterministic,
\emph{readable} types.
Similarly, our  characterization allows us to show how the power of a set of
deterministic, readable types to solve RC is related to the power of the
individual types.

\begin{theorem}
\label{robustness}
Let ${\mathcal T}$ be a non-empty set of deterministic, readable types and suppose
$n=\max\{rcons(T) : T\in{\mathcal T}\}$ exists.  Then,
$n \leq rcons({\mathcal T})\leq n+1$.
(If $\max\{rcons(T) : T\in{\mathcal T}\}$ does not exist, then $rcons({\mathcal T})=\infty$.)
\end{theorem}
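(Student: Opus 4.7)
The plan is to prove the two inequalities separately and handle the unboundedness case by a simple reduction. The lower bound $n\leq rcons({\mathcal T})$ is immediate: pick $T\in {\mathcal T}$ with $rcons(T)=n$ and use the RC algorithm for $n$ processes that uses objects of type $T$ and registers; this algorithm is also an algorithm using objects of types in ${\mathcal T}$ and registers.

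For the upper bound $rcons({\mathcal T})\leq n+1$, I would argue by contradiction: suppose there is an algorithm $\mathcal{A}$ using objects of types in ${\mathcal T}$ and registers that solves RC for $n+2$ processes. The key observation is that the proof of Theorem \ref{necessary} goes through almost verbatim when an algorithm uses several types rather than just one. Indeed, the valency argument constructs a critical execution $\gamma$ after which every process is poised to perform a non-read operation on a single shared object $O$; the only thing that matters is that $O$ has some fixed (deterministic) type $T\in {\mathcal T}$. The remainder of the proof of Theorem \ref{necessary} (construction of the teams, Lemma \ref{same-valency}, verification of the three conditions of \cond{$(n+1)$}) only inspects the state of $O$ and the operations $op_1,\ldots,op_{n+2}$ on $O$, so it concludes exactly as before that $T$ is \cond{$(n+1)$}. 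I would state this explicitly as a short remark that Theorem \ref{necessary} generalises from ``uses objects of type $T$ and registers'' to ``uses objects of any types and registers, with $T$ being the type of the critical object $O$.''

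Once some $T\in {\mathcal T}$ is \cond{$(n+1)$}, I invoke Theorem \ref{sufficient} (note $T$ is readable since every type in ${\mathcal T}$ is readable) to conclude that $T$, together with registers, solves RC among $n+1$ processes. Hence $rcons(T)\geq n+1$, contradicting the hypothesis that $n=\max\{rcons(T'): T'\in {\mathcal T}\}$. Therefore no such algorithm $\mathcal{A}$ exists, giving $rcons({\mathcal T})\leq n+1$.

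Finally, for the parenthetical statement, suppose $\max\{rcons(T):T\in{\mathcal T}\}$ does not exist. Then for every positive integer $k$ there is some $T_k\in {\mathcal T}$ with $rcons(T_k)\geq k$, so objects of type $T_k$ together with registers solve $k$-process RC, and these objects all lie in the toolbox ${\mathcal T}\cup\{\text{registers}\}$. Hence $rcons({\mathcal T})\geq k$ for every $k$, so $rcons({\mathcal T})=\infty$. The only non-routine step is extending Theorem \ref{necessary} to the multi-type setting, and this is essentially a one-line observation because the critical execution singles out a single object; I expect the main writing obstacle to be presenting this observation cleanly without repeating the valency proof.
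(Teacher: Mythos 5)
Your proposal is correct and follows essentially the same route as the paper: both arguments obtain the upper bound by applying the valency/critical-execution analysis of Theorem~\ref{necessary} to a multi-type algorithm, observing that the critical object $O$ has a single type $T\in{\mathcal T}$ which must therefore be \cond{$(n+1)$}, and then invoking Theorem~\ref{sufficient} to contradict the maximality of $n$. The lower bound and the unbounded case are handled identically.
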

\begin{proof}
If $\max\{rcons(T) : T\in{\mathcal T}\}$ does not exist, then for any $n$
there is an algorithm that solves RC using some type $T_n\in {\mathcal T}$.
It follows that $rcons({\mathcal T})=\infty$.  So for the remainder of the proof,
assume the maximum does exist.

It follows from the definition that $rcons(\mathcal{T}) \geq rcons(T)$ for all 
$T\in{\mathcal T}$.  Thus, $rcons(\mathcal{T})\geq \max\{rcons(T) : T\in{\mathcal T}\}$.

We prove the other inequality by contradiction.
Suppose $(n+2)$-process RC can be solved
using types in ${\mathcal T}$.
As in the proof of Theorem \ref{necessary}, there is a critical execution $\gamma$
at the end of which each process is about to update the same object $O$ of some type 
$T\in {\mathcal T}$.  As in that proof,
$T$ is \cond{$(n+1)$}.
By Theorem \ref{sufficient}, there is an $(n+1)$-process RC algorithm
using objects of type $T$ and registers.  So, 
$rcons(T) \geq n+1 > n \geq rcons(T)$, a contradiction.
\end{proof}


\section{The Significance of Recoverable Consensus}
\label{universal}

Herlihy's universal construction \cite{Her91} builds a linearizable, wait-free implementation of \emph{any} shared object using a consensus algorithm as a subroutine.
It creates a linked list of all operations performed on the implemented object, and this list defines the linearization ordering.
Berryhill, Golab and Tripunitara \cite{BGT15}  observed that this result extends
to the model with simultaneous crashes, simply by placing the linked list in non-volatile memory
and using RC in place of consensus.
Their model allows a part of shared memory to be volatile.
Using that volatile memory, their universal construction
provides strictly linearizable implementations.
(\emph{Strict linearizability} \cite{AF03} is similar to linearizability, with the requirement that
an operation in progress when a process crashes is either linearized before the crash or not at all.)
Without volatile shared memory, the history satisfies only the weaker property of \emph{recoverable linearizability} (proposed in \cite{BGT15}, with a correction to the definition in~\cite{Li21}).

Similarly, we observe that Herlihy's universal construction 
also extends to the independent crash model.
To execute an operation $op$, a process
creates a node $nd$ containing $op$ (including its parameters). 
Then, it announces $op$ by storing a pointer to $nd$ in an announcement array. 
Other processes can then help add $op$ to the list, ensuring wait-freedom. 
Processes use an instance of consensus  to agree on the next pointer of each node
in the list.
A process executes a routine \Perform\ that traverses the list.
At each visited node, it proposes a value from the announcement array to the 
consensus algorithm  for the node's next pointer,
until it discovers its own operation's node $nd$ has been appended.
Processes choose which announced value to propose so that each process's
announced value is given priority in a round-robin fashion.  This ensures each announced
node is appended within a finite number of steps.

In our setting, all shared variables are non-volatile, and we use an algorithm for RC
(such as the one in Section \ref{sec:sufficient}) in place of consensus.
For simplicity, we use a standard assumption
(as in, e.g.,~\cite{ABH18,ABF+22,FKK22,CFR18,RamalheteCFC19,CorreiaFR20,FB+20,FPR21}):
when a process recovers from a crash, it executes a {\em recovery function}. 
This assumption is not restrictive; 
we could, alternatively, add the code of the recovery function 
at the beginning of the universal algorithm, thus
forcing every process to execute this code before it actually
starts executing a new operation.
When a process $p$ crashes and recovers, the recovery function checks if the last operation 
that $p$  announced before crashing has been appended in the list and if not, it executes the code to append it. 
Specifically, the recovery function simply calls \Perform\ for the last announced node of $p$.
See \arxcam{Appendix \ref{universal-app}}{the full version \cite{full}} for pseudocode of the recoverable universal construction \RUniversal.

As in Herlihy's construction, the helping mechanism of \RUniversal\ ensures wait-freedom.
The recoverable implementations obtained using \RUniversal\
satisfy 
{\em nesting-safe recoverable linearizability} (NRL)~\cite{ABH18}, which requires that 
a crashed operation is linearized within an interval that includes its crashes and recovery attempts. 
NRL implies \emph{detectability}~\cite{ABH18} which ensures that 
a process can discover upon recovery whether or not its last operation took effect, 
and guarantees that if it did, its response value was made persistent. 
Other well-known safety conditions for the crash-recovery setting include
{\em durable linearizability}~\cite{IMS16},  
which has been proposed for the system-crash failures model and requires that the effects of all operations
that have completed before a crash are reflected in the object's state upon recovery,
and {\em persistent linearizability}~\cite{GL04}, which has been proposed for a model
where no recovery function is provided and requires that an operation interrupted 
by a crash can be linearized up until the invocation of the next operation by the same process. 
With minor adjustments these conditions are meaningful in our setting
and \RUniversal\ satisfies both of them. 

Moreover, \RUniversal\ has the following desirable property. 
Suppose an implementation $I$ uses a linearizable object~$X$ in a system
with halting failures, but no crash-recovery failures.
We can transform $I$ to an implementation $I'$ by replacing every instance of $X$ in $I$ 
with an invocation of \RUniversal\ (that implements $X$). 
Then, every trace produced by $I'$ in a system with crash and recovery failures
is also a trace of 
$I$ using a linearizable object $X$ in a system with halting failures.
In this way, any algorithm designed for the standard asynchronous model with halting failures 
can be automatically transformed to another algorithm to run in the independent crash-recovery model, 
as long as we can solve RC.  

The traditional consensus hierarchy gives us information about which implementations
are possible (via universality), but also
tells us some implementations
are impossible. This is another reason to study the consensus hierarchy.
Specifically, if $cons(T_1)<cons(T_2)$, then
there is no wait-free implementation of object type $T_2$ from objects of type $T_1$ 
for more than $cons(T_1)$ processes \cite{Her91}.
We give an analogous result for the RC hierarchy. 
\arxcam{The proof is in Appendix~\ref{rcons-implement-proof}.}{For the proof, see \cite{full}.}

\begin{theorem}
\label{rcons-implement}
Let $n\leq rcons(T_2)$.
If there is a wait-free, persistently linearizable implementation of $T_2$ from atomic 
objects of type $T_1$ (and registers) in a system of $n$ processes with independent crashes, 
then $rcons(T_1)\geq n$.
\end{theorem}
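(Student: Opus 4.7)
The argument is a standard composition, analogous to Herlihy's impossibility proof for ordinary consensus numbers. Since $n \leq rcons(T_2)$, there is an algorithm $A_2$ that solves $n$-process RC using objects of type $T_2$ and registers in the independent crash-recovery model. By hypothesis, there is a wait-free, persistently linearizable implementation $\mathcal{I}$ of $T_2$ from atomic objects of type $T_1$ (and registers) in the same model. I would construct a new algorithm $A_1$ by taking $A_2$ and replacing every invocation of an operation on a $T_2$ object $X$ by a call to the corresponding $\mathcal{I}$-routine on the $T_1$ base objects implementing $X$. The goal is then to show $A_1$ is an $n$-process RC algorithm using only objects of type $T_1$ and registers, which forces $rcons(T_1) \geq n$.

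Recoverable wait-freedom of $A_1$ is immediate: any run of $A_2$ between a start (or recovery) and its next crash or output invokes finitely many $T_2$-operations by recoverable wait-freedom of $A_2$, and each one expands into finitely many $T_1$-steps by wait-freedom of $\mathcal{I}$.

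For agreement and validity, I would argue as follows. Fix any execution $E_1$ of $A_1$ in the independent crash-recovery model. Because $\mathcal{I}$ is persistently linearizable, the subhistory of $E_1$ at each implemented $T_2$ object $X$ admits a linearization in which every completed $\mathcal{I}$-operation is placed between its invocation and response, and every operation interrupted by a crash is either omitted or placed at some point before the next invocation on $X$ by the same process. Splicing these linearizations back into the outer code of $A_2$ produces a legitimate execution $E_2$ of $A_2$ running on true atomic $T_2$ objects, with the same high-level pattern of starts, crashes, recoveries, inputs, and outputs by each process as in $E_1$. Since $A_2$ satisfies agreement and validity in $E_2$, $A_1$ inherits them in $E_1$.

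\textbf{Main obstacle.} The delicate step is verifying that the spliced execution $E_2$ is actually a legal execution of $A_2$ on atomic $T_2$ objects. Persistent linearizability permits a crashed $\mathcal{I}$-operation to take effect at some point before the next invocation on that object by the same process, so $E_2$ may contain ``ghost'' $T_2$ operations whose responses $A_2$ never observed. I would check that the semantics of $A_2$ in the crash-recovery model accommodates this: because $A_2$ restarts from the beginning after any crash, and its correctness is stated in terms of the values output (not in terms of matching every invocation with a response), such ghost operations are harmless. Once this correspondence is pinned down, the rest of the argument is mechanical.
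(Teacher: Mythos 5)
Your proposal matches the paper's proof essentially step for step: compose the RC algorithm for $T_2$ with the wait-free implementation from $T_1$, derive recoverable wait-freedom from wait-freedom of the implementation, and map each execution of the composed algorithm to a legal execution of the original by contracting completed operations to their linearization points, deleting non-linearized crashed operations, and inserting invocation/response pairs for the ``ghost'' operations that are linearized but never returned. The one detail the paper makes explicit that you leave implicit is that when such a ghost operation's linearization point falls \emph{after} the crash, the crash step itself must be shifted to just after the inserted response so that the constructed execution remains legal; your observation that the restarted process's behaviour is unaffected is exactly why this reordering is harmless.
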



\begin{corollary}
If $rcons(T_1)< rcons(T_2)$ then there is no wait-free, persistently linearizable implementation
of $T_2$ from atomic objects of type $T_1$ and registers in a system of more than
$rcons(T_1)$ processes with independent crashes.
\end{corollary}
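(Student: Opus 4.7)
The plan is to derive the corollary as an immediate consequence of Theorem~\ref{rcons-implement}. Suppose, for contradiction, that for some $n > rcons(T_1)$ there is a wait-free, persistently linearizable implementation $I$ of $T_2$ from atomic objects of type $T_1$ and registers in a system of $n$ processes with independent crashes. Set $m = rcons(T_1) + 1$, so $m \leq n$.

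First, I would obtain from $I$ an $m$-process implementation $I'$ by restricting attention to processes $p_1, \ldots, p_m$ and forbidding $p_{m+1}, \ldots, p_n$ from invoking operations or taking any steps. Every trace of $I'$ is then a trace of $I$ in which the latter $n-m$ processes are idle, and the crash-recovery adversary for $m$ processes is a special case of the one for $n$ processes; since wait-freedom, persistent linearizability, and the restriction to $T_1$-objects and registers are all preserved under such a restriction, $I'$ is a wait-free, persistently linearizable implementation of $T_2$ from $T_1$-objects and registers for $m$ processes.

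Next, I would verify the hypothesis $m \leq rcons(T_2)$ of Theorem~\ref{rcons-implement}. Because $rcons(T_2) > rcons(T_1)$ we have $rcons(T_2) \geq rcons(T_1) + 1 = m$, which holds uniformly whether $rcons(T_2)$ is finite or $\infty$. Applying Theorem~\ref{rcons-implement} to $I'$ then yields $rcons(T_1) \geq m = rcons(T_1) + 1$, the desired contradiction. Since Theorem~\ref{rcons-implement} supplies all of the non-trivial content, there is no real obstacle; the only point requiring a moment of thought is choosing $m$ so that it is both strictly larger than $rcons(T_1)$ and at most $rcons(T_2)$, and the minimal choice $m = rcons(T_1) + 1$ accomplishes both in one stroke.
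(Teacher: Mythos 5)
Your proposal is correct and follows the route the paper intends: the corollary is just the contrapositive of Theorem~\ref{rcons-implement}, applied after restricting the hypothetical $n$-process implementation to $m = rcons(T_1)+1 \leq rcons(T_2)$ processes so that the theorem's hypothesis $m \leq rcons(T_2)$ is met. Your explicit handling of the process-restriction step and the choice of $m$ fills in exactly the details the paper leaves implicit.
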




\section{Discussion}

In this paper, we studied solvability, without considering
efficiency. A lot of research has focused on designing efficient recoverable
transactional memory systems~\cite{VT+11,CC+11-I,CD+14-I,RamalheteCFC19,CGZ18,BC+20}) 
and recoverable universal constructions~\cite{CorreiaFR20,FKK22}.
Wait-free solutions appear in~\cite{RamalheteCFC19,CorreiaFR20,FKK22}. 
Some~\cite{CorreiaFR20,FKK22} are based on existing wait-free universal constructions~\cite{CRP20,FK14}
for the standard shared-memory model with halting failures.
All except~\cite{FKK22}, 
satisfy weaker consistency conditions than nesting-safe recoverable linearizability.
Attiya {\em et al.}~\cite{ABH18} gave  a recoverable implementation of a Compare\&Swap (CAS)
object. Any concurrent algorithm from read/write and CAS objects can become recoverable 
by replacing its CAS objects with their recoverable implementation~\cite{ABH18}. 
Capsules~\cite{NormOptQueue19} can also be used to 
transform concurrent algorithms that use only read and CAS primitives
to their recoverable versions.
Many other general techniques~\cite{ABF+22,FPR21,FB+20} have been proposed for deriving recoverable 
lock-free data structures from their concurrent implementations.

Our work leaves open several questions. 
Is there a deterministic, readable type $T$ with $rcons(T)=cons(T)-2$?
We saw in Corollary \ref{drop-by-2} that $cons(T)-rcons(T)$ can be at most 2 for deterministic, readable types.
How big can this difference be for non-readable~types?

It would be nice to close the gap between the necessary condition of being \cond{$(n-1)$}
and the sufficient condition of being \cond{$n$} for the solvability of 
RC using deterministic, readable types.
Perhaps a good starting point  is to determine whether being
\cond{2} is actually necessary for solving 2-process RC. 
\ignore{
\here{Can omit or compact rest of this parag if we need space}
In this case, $|A|=|B|=1$, so only the condition $Q_A\cap Q_B$ must be checked.
Let $q_A$ and $q_B$ be the states after $op_A$ or $op_B$ is performed on an object in state $q_0$.
Let $q_{AB}$ and $q_{BA}$ be the states after both operations are performed in either order.
Then, the \cond{2} property states that $\{q_A,q_{AB}\} \cap \{q_B,q_{BA}\}$ is empty.
One of the main technical results of Golab \cite{Gol20} showed that if
2-process RC is solvable, then there
are $q_0, op_1, op_2$ such that the operations neither commute nor overwrite.
This can be restated
as saying $q_{BA} \notin \{q_A,q_{AB}\}$ and $q_{AB}\notin \{q_B,q_{BA}\}$.
So it just remains to check whether 2-process consensus can be solved if $q_A=q_B$.
}
Finally, it would be interesting to  characterize 
read-modify-write types capable of solving $n$-process RC (as was done 
in \cite{Rup00} for the standard consensus problem), and see whether 
the RC hierarchy is robust for deterministic, readable types 
(or for all deterministic types).

\begin{acks}
\x{
This research was conducted while Eric Ruppert was visiting the Universit\'{e} Paris Cit\'{e}, with funding from 
\grantsponsor{bad}{IDEX-Universit\'{e} Paris Cit\'{e}}{https://u-paris.fr/}
project 
\grantnum{bad}{BAD} and
\grantsponsor{ducat}{ANR DUCAT}{https://www.irif.fr/anr-ducat/}
project number
\grantnum{ducat}{-20-EC48-0006}.
Support was also provided by the
\grantsponsor{nserc}{Natural Sciences and Engineering Research Council}{https://www.nserc-crsng.gc.ca/} 
of Canada, the
\grantsponsor{curie}{Marie Sklodowska-Curie}{https://marie-sklodowska-curie-actions.ec.europa.eu/} project
\grantnum{curie}{PLATON (GA No 101031688)},
and 
\grantsponsor{hfri}{HFRI}{https://www.elidek.gr/en/}
under the 2nd Call for HFRI Research Projects to support faculty members and researchers (project number
\grantnum{hfri}{3684}).
}
\end{acks}

\pagebreak[1]

\bibliographystyle{plain}
\bibliography{recoverable}

\arxcam{

\pagebreak

\appendix

\section{Proof of Theorem \ref{system-thm}}
\label{system-proof}

\newcommand{\decide}{\mbox{\sc Decide}}
\newcommand{\pref}{\mbox{\it pref}}
\newcommand{\rrun}{run} 

\begin{figure}[b]
{\small
\begin{code}
\firstline
shared variables:\nl
\n  array $Round[1..n]$ of registers, initially 0\nl
    array $D[1..\infty]$ of registers, initially $\bot$\bl
\p \nl
\decide($v$)\nl
\n  $\pref \leftarrow v$ \llabel{init-pref}\nl
	$r\leftarrow 1$\nl
    loop\nl
\n     if $Round[j] < r$ then \llabel{read-round}\nl
\n         $Round[j] \leftarrow r$ \llabel{write-round}\nl
           if $r>1$ and $D[r-1] \neq \bot$ then \nl
\n            $\pref \leftarrow D[r-1]$ \llabel{update-pref1}\nl
\p         end if\nl
           $\pref \leftarrow C_r.\decide(\pref\,)$ \llabel{call-cons}\nl
           $D[r] \leftarrow \pref$ \llabel{write-D}\nl
           if $\forall k$, $Round[k] \leq r$ then \llabel{collect-round}\nl
\n              return \pref \llabel{output}\nl
\p         end if\nl
\p     else if $r>1$ and $D[r-1]\neq \bot$ then\nl
\n         $\pref\leftarrow D[r-1]$ \llabel{update-pref2}\nl
\p     end if\nl
       $r \leftarrow r+1$\nl
\p  end loop\nl
\p end \decide
\end{code}
}
\caption{\label{system-alg}Algorithm for process $p_j$ to solve recoverable consensus with simultaneous crashes using instances of standard consensus $C_1, C_2, \ldots$.}
\end{figure}

We must show that RC is solvable for $n$ processes with simultaneous crashes if and only if the standard consensus problem is solvable for $n$ processes.
The ``only if'' direction is trivial, since any algorithm for RC also solves
consensus:  just consider executions in which there are no crashes.

We prove the converse using the algorithm  shown in Figure~\ref{system-alg}, which is similar to the algorithm in \cite{Gol20} for a bounded number of crashes.
The RC algorithm uses multiple instances of the consensus algorithm, denoted $C_1, C_2, \ldots$.
Each process attempts to access $C_1, C_2, \ldots$ in turn, until it receives a result from one of them and sees that no process has yet moved on to the next object.
The register $D[r]$ is used to record the output of $C_r$.
An iteration of the loop with $r = i$ is called a \emph{\rrun\ of round $i$}.
A process may \rrun\ round $i$ multiple times if it crashes and recovers.
Process $p_j$ records in register $Round[j]$ the largest $r$ for which $p_j$ has started to \rrun\ round $i$.
This variable is used to ensure that $p_j$ does not access $C_r$ a second time if
it crashes during a \rrun\ of round $r$.

\begin{lemma}
The algorithm in Figure \ref{system-alg} satisfies the recoverable wait-freedom property.
\end{lemma}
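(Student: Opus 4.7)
The plan is to show that any run of $p_j$ that does not end in a crash produces an output after finitely many of $p_j$'s own steps. I split the run into a \emph{catch-up phase}, in which the test on line~\lref{read-round} fails so $p_j$ executes only the else branch (line~\lref{update-pref2}), and an \emph{active phase}, in which $p_j$ writes $Round[j]\leftarrow r$ on line~\lref{write-round}, calls $C_r.\decide$ on line~\lref{call-cons}, and performs the collect on line~\lref{collect-round}. The catch-up phase is easy to bound: each of its iterations takes $O(1)$ steps, and the phase ends once $r$ exceeds the value of $Round[j]$ at the start of the run, a finite value because $Round[j]$ is written only by $p_j$ itself and only with finite round numbers. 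Each active iteration also consumes only finitely many steps, since $C_r.\decide$ terminates by the wait-freedom of the underlying consensus algorithm. So the task reduces to bounding the number of active iterations in a single run.

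I would argue this by contradiction. Suppose $p_j$ performs infinitely many active iterations, so that $r\to\infty$. For each such $r$, the check on line~\lref{collect-round} must fail, meaning $p_j$'s collect observes some $Round[k]>r$. Because each $Round[k]$ is written only by $p_k$ and the test on line~\lref{read-round} forces every write to strictly increase it, $Round[k]$ is monotonically non-decreasing over the whole execution. Hence every failed check has a witness $p_k$ that has executed line~\lref{write-round} at a round strictly greater than $r$. By pigeonhole over the $n-1$ processes other than $p_j$, some fixed $p_k$ serves as witness at infinitely many of $p_j$'s active iterations, forcing $Round[k]\to\infty$ during $p_j$'s run and implying that $p_k$ also takes infinitely many steps without returning.

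The hard part is turning this into a contradiction. My plan is to consider the set $S$ of processes whose $Round$ value grows without bound during $p_j$'s run (so $p_j\in S$) and argue by induction on $|S|$ that $S$ must be empty. Among the processes in $S$ I would identify one, $p_{k^*}$, whose $Round[k^*]$ leads the others in $S$ at infinitely many moments; I would then argue, using an interleaving argument that exploits the bounded per-iteration step cost (dominated by the wait-free consensus call on line~\lref{call-cons}), that $p_{k^*}$ must eventually complete its own collect on line~\lref{collect-round} while every $Round[m]\le Round[k^*]$, so $p_{k^*}$ returns on line~\lref{output} and freezes $Round[k^*]$, contradicting $p_{k^*}\in S$. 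The delicate point is ruling out schedules in which other processes of $S$ continually overtake $p_{k^*}$ during its collect; I would handle this by amortizing each overtake against the cost the overtaker incurs to reach each new round, so that only finitely many overtakes are possible per iteration of $p_{k^*}$, yielding the contradiction and reducing $|S|$.
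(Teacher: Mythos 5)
There is a genuine gap, and it sits exactly where you flag ``the hard part.'' Your setup is fine: the catch-up/active decomposition, the finiteness of each iteration (via wait-freedom of $C_r$), the monotonicity of $Round[k]$, and the pigeonhole step producing a second process $p_k$ with $Round[k]\to\infty$ are all correct. But the proposed mechanism for the final contradiction --- amortizing ``overtakes'' against per-iteration step costs so that ``only finitely many overtakes are possible per iteration of $p_{k^*}$'' --- does not close the argument. The model is asynchronous, so relative speeds and step costs carry no information, and finitely many overtakes per iteration is perfectly consistent with the collect on line \lref{collect-round} failing in every one of infinitely many iterations. What is actually needed is a causality argument, not a speed argument: for a process to make $Round[m]$ reach $i+1$ it must first have \emph{completed} the collect at round $i$ (having passed the test on line \lref{read-round} there), so the \emph{first} process to complete the collect at round $i$ cannot observe any entry exceeding $i$ and therefore passes the test and terminates. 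This is the paper's proof: it fixes a time $t$ after the last crash, chooses $i$ strictly larger than every entry of $Round[1..n]$ at time $t$, and applies this ``first collector at round $i$ wins'' observation. The choice of $i$ beyond all pre-crash values is essential and is absent from your plan: a process that crashed after executing line \lref{write-round} at round $r$ but before its collect will, on recovery, skip the collect at round $r$ entirely and advance to $r+1$, so the ``first collector wins'' claim is simply false for rounds that were in progress when a crash occurred.

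A second, smaller omission: your inference that the witness $p_k$ ``takes infinitely many steps without returning'' tacitly assumes $p_k$ does not crash and restart during $p_j$'s run (otherwise $Round[k]$ could keep growing across several runs of $p_k$ even though each run returns). That assumption is justified here only because this lemma lives in the \emph{simultaneous} crash model, where any crash of $p_k$ would also crash $p_j$ and end the run under consideration; the same fact is what guarantees the crash-free suffix after time $t$ that the paper's proof relies on. You should state this explicitly rather than leave it implicit.
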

\begin{proof}
To derive a contradiction, suppose there is a time $t$ after which there are no more crashes and processes continue to take steps without any process terminating.
Let $i$ be greater than the maximum entry in $Round[1..n]$ at time $t$.
Any process that executes enough steps after $t$ will either terminate, crash or reach 
round $i$.  The processes that reach round $i$ without crashing will satisfy the test at line \lref{read-round}.  Among those processes, the first to complete line \lref{collect-round} will satisfy the test and terminate, a contradiction.
\end{proof}

\begin{observation}
\label{round-incr}
For any $j$, the value in $Round[j]$ only increases.
\end{observation}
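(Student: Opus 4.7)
The plan is to observe that $Round[j]$ is a register that is written only by process $p_j$, so it suffices to check that every write by $p_j$ installs a value strictly larger than the current content of $Round[j]$. The only line of the algorithm that writes $Round[j]$ is line \ref{write-round}, and the value written there is the current contents of $p_j$'s local variable $r$. Line \ref{write-round} is executed only immediately after $p_j$ passes the test at line \ref{read-round}, which requires $Round[j] < r$. Since no process other than $p_j$ ever writes $Round[j]$, the value stored in $Round[j]$ cannot change between $p_j$'s read at line \ref{read-round} and its subsequent write at line \ref{write-round}, so the write strictly increases the contents of $Round[j]$.

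The only nuance is what happens when $p_j$ crashes: at recovery, its local variable $r$ is reset to $1$, but $Round[j]$ lives in non-volatile shared memory and is unaffected by the crash. So after a crash, $p_j$ begins iterating with small values of $r$, but the guard at line \ref{read-round} causes it to skip the write (and simply increment $r$) until $r$ once again exceeds the value already stored in $Round[j]$. Hence monotonicity is preserved across crashes as well. There is really no obstacle in this argument beyond performing this sanity check; the observation is immediate from the code.
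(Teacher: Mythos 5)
Your proof is correct and follows essentially the same argument as the paper: only $p_j$ writes $Round[j]$, and it does so only after the guard at line \ref{read-round} confirms the new value $r$ exceeds the current contents. The extra remarks about the register being unchanged between the read and the write, and about behaviour across crashes, are sound but just spell out what the paper leaves implicit.
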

\begin{proof}
Only process $p_j$ writes to $Round[j]$.  Moreover, $p_j$ writes a 
value $r$ to $Round[j]$ (at line \lref{write-round})
only after seeing (at line \lref{read-round}) that the current value of $Round[j]$ is less than $r$.
\end{proof}

\begin{lemma}
\label{C-safe}
For each $i$, no process invokes \decide\ on $C_i$ more than once.
\end{lemma}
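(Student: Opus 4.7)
The plan is to show that once a process $p_j$ has invoked $C_i.\decide$, the register $Round[j]$ has value at least $i$, and this prevents $p_j$ from ever reaching line \lref{call-cons} with $r=i$ again, even across crashes and recoveries (since $Round[j]$ is in non-volatile shared memory).

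More concretely, I would first observe that the only line where a process invokes an operation on $C_i$ is line~\lref{call-cons}, and the invocation is $C_i.\decide$ exactly when the local variable $r$ equals $i$. So to show that $p_j$ does not invoke $C_i.\decide$ twice, it suffices to show that $p_j$ executes line \lref{call-cons} with $r=i$ at most once across all of its \rrun s.

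Next I would argue that each execution of line \lref{call-cons} with $r=i$ by $p_j$ must be preceded (within the same \rrun) by an execution of line \lref{write-round} with $r=i$, because reaching line \lref{call-cons} from line \lref{read-round} requires passing through the conditional's true branch, which executes line \lref{write-round} first. Hence, the first time $p_j$ invokes $C_i.\decide$, the value $i$ has already been written to $Round[j]$. After the crash and recovery (or any later time), Observation~\ref{round-incr} guarantees $Round[j]\geq i$ forever. Therefore, in any subsequent \rrun, when $p_j$ iterates its loop up to $r=i$, the test $Round[j] < r$ at line~\lref{read-round} fails, so $p_j$ takes the else branch and does not reach line~\lref{call-cons} with $r=i$.

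The potential subtlety, which I think is the main point to be careful about, is that a crash may occur between lines \lref{read-round} and \lref{write-round} (after passing the test but before the write takes effect). In that case, no invocation of $C_i.\decide$ has yet occurred, so there is nothing to rule out. After recovery, $p_j$ restarts from the beginning and may legitimately perform its first (and, by the previous argument, only) invocation of $C_i.\decide$. Once that invocation is preceded by the corresponding update to $Round[j]$, the argument above closes the case.
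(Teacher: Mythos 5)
Your proof is correct and follows essentially the same route as the paper's: the write of $i$ to $Round[j]$ at line \lref{write-round} precedes the first invocation of $C_i.\decide$, and Observation~\ref{round-incr} then makes every later test at line \lref{read-round} with $r=i$ fail. Your extra remark about a crash between the test and the write is a sound (if implicit in the paper) clarification, not a departure.
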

\begin{proof}
Before $p_j$ invokes \decide\ on $C_i$ for the first time, $p_j$ writes $i$ to $Round[j]$.
By Observation \ref{round-incr}, $Round[j] \geq i$ at all times after that invocation.
Thus, any subsequent test by $p_j$ at line \lref{read-round} of a \rrun\ of round $i$ will fail, 
and $p_j$ will never invoke \decide\ on $C_i$ again.
\end{proof}

Since $C_i$ is accessed correctly, even if processes crash and recover, it follows that
calls to $C_i.\decide$ cannot return different values and that 
the common output value is one of the input values to one of the calls to $C_i.\decide$.
In particular, this means that all values written to $D[i]$ at line \lref{write-D} are identical.

\begin{lemma}
\label{valid}
Let $P_0$ be the set of input values to the RC algorithm.
For $r\geq 1$, let $P_r$ be the set of all values that are either written into $D[r]$, or stored 
in the local variable \pref\ of some process 
when it completes a \rrun\ of round $r$ by terminating at line \lref{output} or reaching 
the end of the iteration of the loop.  
For $r\geq 1$, $P_r \subseteq P_{r-1}$.
\end{lemma}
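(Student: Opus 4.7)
The plan is a case split on how a value can enter $P_r$: either (i) it is written to $D[r]$ at line \lref{write-D}, or (ii) it is the final \pref\ of some process that completes a \rrun\ of round $r$. First I would record a housekeeping observation about a single run (the steps between two consecutive crashes of a single process): \pref\ is set to the input on line \lref{init-pref} and is afterwards only modified at lines \lref{update-pref1}, \lref{call-cons}, and \lref{update-pref2}; upon a crash-and-recovery, \pref\ is reset to the input and the loop restarts with $r=1$.

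For case (i), Lemma \ref{C-safe} together with agreement and validity of $C_r$ imply that the unique value ever written to $D[r]$ equals the \pref\ argument with which some process $p_k$ invoked $C_r.\decide$ at line \lref{call-cons} during its round-$r$ iteration. That \pref\ was either just overwritten by $D[r-1]$ at line \lref{update-pref1} (so it lies in $P_{r-1}$ by definition), or it is still the \pref\ value at the start of $p_k$'s round-$r$ iteration. For case (ii), the process either passed the test at line \lref{read-round}, in which case its final \pref\ equals the value written to $D[r]$ and we reduce to case (i); or it failed that test, in which case its final \pref\ is either $D[r-1]\in P_{r-1}$ via line \lref{update-pref2}, or is unchanged from the start of that iteration.

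It therefore suffices to prove that the value of \pref\ at the \emph{start} of any round-$r$ iteration already lies in $P_{r-1}$. If $r=1$, \pref\ equals the input of that run, so it lies in $P_0$. If $r>1$, I would invoke the single-run remark: within one run the process reaches iteration $r$ of the loop only by reaching the bottom of iteration $r-1$ in the same run (i.e., by completing a round-$(r-1)$ \rrun\ in the sense of the lemma), so the value of \pref\ at that transition is included in $P_{r-1}$ by definition.

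The subtle step, and where I would take the most care in the write-up, is this last reduction: it relates the start of iteration $r$ to the end of iteration $r-1$ \emph{within the same run}. Because a crash resets \pref\ and forces the loop back to $r=1$, one cannot migrate \pref\ values across run boundaries, so the statement ``the value of \pref\ entering iteration $r$ was the value of \pref\ leaving iteration $r-1$'' only holds when those iterations belong to the same run. Once that ``same run'' invariant is spelled out explicitly, the remainder is a routine case enumeration.
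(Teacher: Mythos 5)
Your proof is correct and takes essentially the same route as the paper's: each value in $P_r$ is traced back through the three places \emph{pref} can be assigned inside a round-$r$ iteration (using validity of $C_r$ for the consensus output, and the fact that $D[r-1]$ holds only values of $P_{r-1}$ for the other two) to either $D[r-1]$ or the value of \emph{pref} at the start of the round-$r$ iteration, which for $r>1$ equals its value at the end of the round-$(r-1)$ iteration of the same run, hence lies in $P_{r-1}$ by definition. The paper merely packages this by introducing the auxiliary set $P_r'$ of all values ever held in \emph{pref} during a run of round $r$ and showing $P_r\subseteq P_r'\subseteq P_{r-1}$; the ``same run'' invariant you rightly single out is exactly what its phrase about the end of $p_j$'s run of round $r-1$ relies on.
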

\begin{proof}
Let $P_r'$ be the set of all values that are in the local variable \pref\ of some process at some time 
during a \rrun\ of round $r$ by that process.  Since $P_r \subseteq P_r'$, we must simply
show that $P_r' \subseteq P_{r-1}$.
Let $v\in P_r'$.  
Setting \pref\  at line \lref{call-cons} does not add any new value to $P_r'$, by the validity 
property of the consensus algorithm $C_r$.
Thus, some process's \pref\ was either equal to $v$ at the beginning of a \rrun\ of round~$r$, 
or set to $v$ at line \lref{update-pref1} or \lref{update-pref2}.  We show that in each of these cases, $v\in P_{r-1}$.
If $r=1$ and \pref\ is $v$ at the start of a \rrun\ of round 1, then \pref\ was set to $v$ at line \lref{init-pref}, so $v\in P_0$.
If $r>1$ and \pref\ is $v$ at the start of a \rrun\ of round $r$, then \pref\ was $v$ at the end of $p_j$'s \rrun\ of round $r-1$, so $v\in P_{r-1}$.
If \pref\ is set to $v$ at line \lref{update-pref1} or \lref{update-pref2}, then $r>1$ and
$v$ was stored in $D[r-1]$, so $v\in P_{r-1}$.
\end{proof}

The validity property follows from Lemma \ref{valid}:  if a process returns $x$ at line \lref{output} in a \rrun\ of round $r$, then $x\in P_r \subseteq P_0$.
Thus, all output values are in the set $P_0$ of input values.

\begin{lemma}
\label{agreement}
The algorithm in Figure \ref{system-alg} satisfies the agreement property.
\end{lemma}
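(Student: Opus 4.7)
The plan is to identify the smallest round $r$ at which any process outputs a value, fix a process $p_j$ that outputs value $v$ during a \rrun\ of round $r$, and prove by strong induction on $r'\geq r$ that every return of $C_{r'}.\decide$ equals $v$ (and hence every value ever written to $D[r']$ equals $v$). Since every output is produced at line~\lref{output} and is therefore the return of some $C_{r'}.\decide$ with $r' \geq r$, this claim immediately yields agreement. The base case $r' = r$ is immediate from Lemma~\ref{C-safe} and the agreement property of the subroutine $C_r$, together with $p_j$ receiving $v$. For the inductive step, by validity of $C_{r'}$ it suffices to show that every invocation $C_{r'}.\decide(w)$ satisfies $w = v$.

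The main lever for the inductive step is the state of the system at the instant $p_j$ performs its check at line~\lref{collect-round}: at that instant $D[r] = v$ (since $p_j$ just wrote it at line~\lref{write-D}) and $Round[k]\leq r$ for every $k$. Combined with Observation~\ref{round-incr}, this forces every write of $Round[k] = r+1$ (and a fortiori every read at line~\lref{update-pref1} of a round-$(r{+}1)$ iteration) to occur strictly after $D[r] = v$ has been established. Hence every proposer to $C_{r+1}.\decide$ reads $v$ at line~\lref{update-pref1} and proposes $v$, handling $r' = r+1$.

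For $r' > r+1$, consider a proposer $p_m$ to $C_{r'}.\decide$ and split by the value of $D[r'-1]$ at its read at line~\lref{update-pref1}. If $D[r'-1]\neq \bot$, the inductive hypothesis gives $D[r'-1] = v$ and thus $\pref$ becomes $v$. Otherwise $D[r'-1] = \bot$; this forces $p_m$ to have ``no''-branched at round $r'-1$ in its current \rrun\ (the ``yes'' branch would have written $D[r'-1]$), which by Observation~\ref{round-incr} and the fact that only ``yes'' branches write $Round[m]$ forces $Round[m] \geq r'-1 > r$ throughout that \rrun. The $Round[k]\leq r$ constraint from $p_j$'s check therefore rules out $p_j$'s output occurring during $p_m$'s current \rrun, so $D[r] = v$ was established in an earlier \rrun\ and is preserved in non-volatile memory. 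In particular, at round $r+1$ of the current \rrun\ $p_m$ ``no''-branches and executes line~\lref{update-pref2}, reading $D[r] = v$ and setting $\pref = v$; subsequent updates to $\pref$ at rounds $r+2, \ldots, r'-1$ read either $\bot$ (so $\pref$ is unchanged) or, by the inductive hypothesis, $v$, so $\pref = v$ is preserved through to the call at line~\lref{call-cons}.

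The main obstacle is the ``no''-branch sub-case above: the argument hinges on simultaneously using the $Round[m]\leq r$ bound at $p_j$'s check (to place $p_j$'s write of $D[r]$ in an earlier \rrun) and the non-volatility of shared memory (to carry $D[r] = v$ into the current \rrun), so that $p_m$'s $\pref$ is overwritten with $v$ at round $r+1$ before it reaches $C_{r'}.\decide$. Everything else follows mechanically from validity of $C_{r'}$ and the inductive hypothesis.
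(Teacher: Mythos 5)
Your proof is correct. The opening moves match the paper's: both identify the minimal output round $r$, use Lemma~\ref{C-safe} and the agreement property of $C_r$ to pin down a unique value $v$ that is the only thing ever written to $D[r]$, and both exploit the same key fact that $p_j$'s successful check at line~\lref{collect-round}, combined with Observation~\ref{round-incr}, forces every advance of any $Round[k]$ past $r$ to happen only after $D[r]=v$ is in place. Where you diverge is in how the value is propagated beyond round $r+1$. The paper reuses Lemma~\ref{valid} (the containment $P_{r'}\subseteq P_{r'-1}$): it verifies once that $P_{r+1}\subseteq\{v\}$ --- checking both the yes- and no-branches of a round-$(r+1)$ iteration --- and then the monotone chain $P_{r'}\subseteq P_{r+1}$ finishes the job. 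You instead run a strong induction over all $r'\geq r$ on the returns of $C_{r'}$, and for a proposer to $C_{r'}$ whose read of $D[r'-1]$ returns $\bot$ you trace its preference back to round $r+1$ of its current run, where line~\lref{update-pref2} overwrites it with $D[r]=v$; this hinges on your (correct) extra observation that such a process must have had $Round[m]\geq r'-1$ from the very start of that run, so that $p_j$'s collect --- and hence the write of $D[r]=v$ --- strictly predates it. Both arguments are sound; the paper's is shorter because Lemma~\ref{valid} is already available from the validity proof, while yours is self-contained in that it never needs the $P_{r'}$ sets, at the price of a more intricate case analysis in the inductive step.
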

\begin{proof}
Consider an execution in which some processes return a value.
Let $i$ be the minimum number such that some process returns a value at line \lref{output} 
of a \rrun\ of round $i$.  
Any process that returns a value at line \lref{output} during a \rrun\ of round $i$ returns the value produced
by $C_i$ at line \lref{call-cons}.
By Lemma \ref{C-safe} and the agreement property of $C_i$, all such processes return the same value $x$,
and $x$ is the only non-$\bot$ value that can ever be stored in $D[i]$.

The process that returns a value $x$ at line \lref{output} in its \rrun\ of round $i$ first
writes $x$ into $D[i]$ at line \lref{write-D} and then reads values less than or equal to $i$ from each
entry of $Round[1..n]$ at line \lref{collect-round}.
By Observation \ref{round-incr}, no entry of $Round$ is greater than $i$ at any time before $x$ is first written 
into $D[i]$.

We show that $P_{i+1}$ (as defined in Lemma \ref{valid}) can only contain $x$.
Consider a process $p_j$ that either writes to $D[i+1]$ or completes a \rrun\ of round $i+1$.
We consider two cases.
First, suppose $Round[j] \geq r$ at line \lref{read-round} of $p_j$'s \rrun\ of round $i+1$.
By the argument of the previous paragraph, $x$ has already been written into $D[i]$,
so $p_j$ updates its \pref\ to $x$ at line~\lref{update-pref2}.
Otherwise, $Round[j] < r$ at line \lref{read-round} of a \rrun\ of round $i+1$.
In this case, line \lref{write-round}
must be performed after $x$ has been written to $D[i]$, as argued in the previous paragraph.
Thus, $p_j$ updates its \pref\ to $x$ at line \lref{update-pref1}.

Since all processes $p_j$ that execute round $i+1$ update \pref\ to $x$, it follows that all inputs to $C_{i+1}$ are $x$.
By the validity property of $C_{i+1}$, processes can only update their \pref\ to $x$ at line \lref{call-cons}
of a \rrun\ of round $i+1$, and can only write the value $x$ into $D[i+1]$.

Thus, $P_{i+1}\subseteq \{x\}$.  
Consider any round $i'\geq i+1$.
By Lemma \ref{valid}, $P_{i'} \subseteq P_{i+1} \subseteq \{x\}$.
In particular, this means that any value that is returned in any round $i'\geq i+1$ must also be $x$.
\end{proof}

This completes the proof of Theorem \ref{system-thm}.
\new{The algorithm in Figure \ref{system-alg} uses an unbounded number of instances of consensus.
In the full version of \cite{Gol20}, Golab showed that this is indeed necessary for such a construction.}

\section{Using Recoverable Team  Consensus to Solve Recoverable Consensus}
\label{team-tournament}

Suppose a collection of processes is partitioned into two non-empty teams.
The  recoverable team consensus problem is the same as the RC problem,
except with the precondition that all input values for processes on the same team must be the same.
The following proposition, used in the proof of Theorem \ref{sufficient}, can be proved
in the same way as the analogous claim in \cite{Nei95,Rup00} for standard consensus, but we include the proof here for the 
sake of completeness.

\begin{proposition}
If objects of type $T$ and registers can be used to solve recoverable team consensus among $n$ processes,
then objects of type $T$ and registers can be used to solve recoverable consensus among $n$ processes.
\end{proposition}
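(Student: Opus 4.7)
I would prove this by induction on $n$. The base case $n=1$ is immediate, since a single process can simply output its own input value. For the inductive step I would partition the $n$ processes into two non-empty teams $A$ and $B$ with $|A|, |B| < n$, and construct a two-phase recoverable consensus algorithm. In phase one, each process on team $X \in \{A,B\}$ participates in an $|X|$-process recoverable consensus algorithm restricted to the processes of team $X$, obtaining a value $v_X$. In phase two, every process uses $v_X$ as its input to an $n$-process recoverable team consensus algorithm (with teams $A$ and $B$), and outputs whatever that algorithm returns. Upon recovery from a crash, a process simply restarts the composed routine from the beginning, re-executing phase one and then phase two.

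To justify the existence of the phase-one algorithms, I would first record a preliminary observation: from the assumed $n$-process recoverable team consensus algorithm one obtains $k$-process recoverable team consensus for every $k \leq n$ by designating $n - k$ of the $n$ processes as non-participants (they never take a step), chosen so that each of the two teams retains at least one participant. Combined with the inductive hypothesis applied at $|A|$ and at $|B|$, this supplies the $|A|$- and $|B|$-process recoverable consensus algorithms required by phase one.

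For correctness, the three properties of recoverable consensus fall out of the corresponding properties of the two phases. Validity holds because $v_X$ is the input of some process on team $X$, and the phase-two output is one of $v_A, v_B$. Agreement holds because every process on team $X$ that completes phase one obtains the unique value $v_X$ dictated by agreement of the phase-one algorithm, so all processes enter phase two using their team's common value, and agreement of phase two then forces all final outputs to coincide. Recoverable wait-freedom follows from the absence of loops between the two phases: any invocation of the composed routine that does not crash completes a bounded number of subroutine steps, each of which is itself wait-free.

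The main obstacle will be to argue carefully that crashes straddling the boundary between the two phases do no harm. If a process $p$ crashes during phase two and later recovers, its local state is wiped and it must re-execute phase one before re-entering phase two; it is crucial that this second execution returns the same $v_X$ that $p$ (or some other team member) previously obtained, since otherwise $p$ might feed a different input into phase two and cause disagreement. This is exactly what the agreement property of the phase-one recoverable consensus algorithm guarantees, as that property binds all outputs across all runs and recoveries of all team $X$ processes to a single value. Once this invariant is secured, the remaining restart dynamics inside phase two are handled by the recoverable wait-freedom and agreement of the team consensus algorithm itself.
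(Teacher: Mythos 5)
Your proposal is correct and follows essentially the same route as the paper's proof: a tournament construction in which each team first agrees internally via a recursively obtained smaller recoverable consensus instance (justified by restricting the given team consensus algorithm to fewer participants) and then feeds the team value into the recoverable team consensus algorithm, with agreement of the phase-one algorithms guaranteeing the team consensus precondition across all runs and recoveries. Your explicit treatment of crashes that straddle the phase boundary is a point the paper leaves implicit, but it does not change the substance of the argument.
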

\begin{proof}
Assume we have a recoverable team consensus for $n$ processes divided into two
non-empty teams $A$ and $B$.
We use induction on $k$ show that $k$ processes can solve RC using objects
of type $T$ and registers.

For $k=1$, this is trivial:  a process can simply return its own input value.

Let $1<k\leq n$.  Assume the claim holds for fewer than $k$ processes.
We construct an algorithm to solve RC for $k$ processes.
Split the $k$ processes into two non-empty teams $A'$ and $B'$ such that $|A'|\leq |A|$ and $|B'|\leq|B|$.
We use two RC algorithms $R_{A'}$ and $R_{B'}$ for $|A'|$ and $|B'|$ processes, respectively.
These algorithms can be built from objects of type $T$ and registers, by the induction hypothesis since
$|A'|$ and $|B'|$ are less than $k$.
Each process first runs the RC algorithm for its team, and uses the output from it
as the input to a recoverable team consensus algorithm $TC$ to produce the final output.
(Note that the $n$-process recoverable team consensus algorithm still works if only $k$
processes uses it; we think of the other $n-k$ processes simply taking no steps.)

The agreement property of $R_{A'}$ and $R_{B'}$ ensure that the precondition of the recoverable team consensus
algorithm is satisfied.
The agreement property of the $k$-process RC algorithm follows from the
agreement property of $TC$.
The recoverable wait-freedom and validity properties follow from the corresponding properties
of $R_{A'}, R_{B'}$ and $TC$.
\end{proof}

\section{Proof of Proposition \ref{3dis2rec}}
\label{3dis2rec-proof}

\begin{proof}
Let $A,B,q_0,op_1,op_2,op_3$ be chosen to satisfy the definition of 3-discerning for the type $T$.
Without loss of generality, assume $A=\{p_1\}$ and $B=\{p_2,p_3\}$.
Let $A'=\{p_1\}$ and $B'=\{p_2\}$.  We show that $A',B',q_0,op_1,op_2$ satisfy the definition
of \cond{2}.
Since $|A'|=|B'|=1$, conditions \ref{init-A-cond} and \ref{init-B-cond} of the definition of
\cond{2} are trivially satisfied.
The argument that condition \ref{empty-cond} is satisfied is similar to the proof of Theorem \ref{discerning}:  if some state $q$ can be reached by two sequences starting with operations on
opposite teams, we can append $op_3$ to the end of each to show that $R_{A,3} \cap R_{B,3} \neq \emptyset$, contradicting the definition of 3-discerning.

Since $T$ is \cond{2}, Theorem \ref{sufficient} implies that $2\leq rcons(T)$.
Moreover $rcons(T)\leq cons(T) = 3$, since any algorithm for RC also solves consensus.
\end{proof}

\section{Proof of Proposition \ref{discerning-ceg}}
\label{Tn-proof}

We use the following definitions of Herlihy \cite{Her91}.
Operations $op_i$ and $op_j$ \emph{commute} from state $q_0$ if 
the sequences $op_i, op_j$ and $op_j, op_i$ take the object from $q_0$ to the same state~$q$.
Operation $op_i$ \emph{overwrites} $op_j$ from $q_0$ if the sequences $op_i$ and $op_j, op_i$ take
the object from $q_0$ to the same state $q$.
It is easy to check that if $q_0, op_1, op_2$ satisfy the definition of \cond{2}, then 
$op_1$ and $op_2$ cannot commute from $q_0$ nor can one overwrite the other from~$q_0$.

\begin{figure*}
\begin{minipage}[t]{.45\textwidth} 
{\small
\begin{code}
\firstline
$op_\tA$\nl
\n  if $winner = \bot$ then\nl
\n      $winner \leftarrow \tA$\nl
        return $\tA$\nl
\p  else \nl
\n      $result \leftarrow winner$\nl
        $col \leftarrow (col + 1) \mbox{ mod }\floor{n/2}$\nl
        if $col = 0$ then\nl
\n          $winner \leftarrow \bot$\nl
            $row = 0$\nl
\p      end if\nl
        return $result$\nl
\p  end if\nl
\p end $op_\tA$
\end{code}
}
\end{minipage}
\hspace*{.1\textwidth}
\begin{minipage}[t]{.45\textwidth} 
{\small
\begin{code}
\firstline
$op_\tB$\nl
\n  if $winner = \bot$ then\nl
\n      $winner \leftarrow \tB$\nl
        return $\tB$\nl
\p  else \nl
\n      $result \leftarrow winner$\nl
        $row \leftarrow (row + 1) \mbox{ mod }\ceil{n/2}$\nl
        if $row = 0$ then\nl
\n          $winner \leftarrow \bot$\nl
            $col = 0$\nl
\p      end if\nl
        return $result$\nl
\p  end if\nl
\p end $op_\tB$
\end{code}
}
\end{minipage}
\vspace*{5mm}

\input{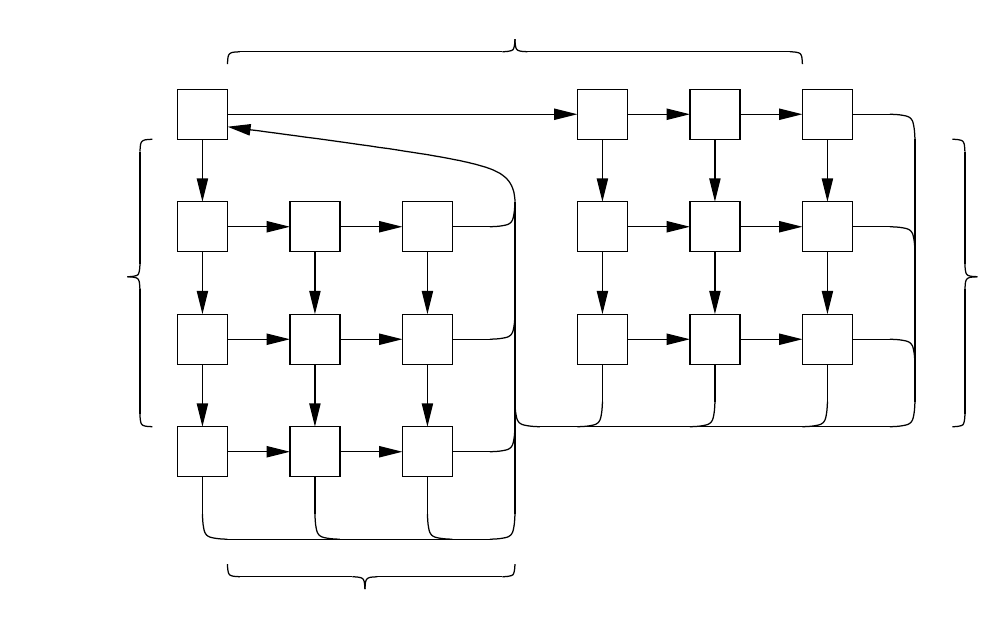_t}
\caption{Behaviour of type $T_n$ used in the proof of Proposition \ref{discerning-ceg}. 
A transition diagram for the object when $n=6$ is shown, where the $op_\tA$ and $op_\tB$ operations are shown
by the horizontal and vertical arrows leaving each state, respectively.  Each transition is labelled by the output value of the operation that causes the transition.
\label{Tn-fig}}
\end{figure*}

\begin{proof}
We define the type $T_n$ as follows.
The set of possible states is $\{(winner,row,col): winner \in \{\tA,\tB\}, 0 \leq row < \ceil{n/2},  0 \leq col < \floor{n/2}\} \cup \{(\bot,0,0)\}$.
$T_n$ supports two operations $op_\tA$ and $op_\tB$, as well as a read operation.
If an update operation is applied to an object in state $(winner,row,col)$, it executes the code in Figure \ref{Tn-fig} atomically to update the state and return a result.

To see that $T_n$ is $n$-discerning,
let $q_0 = (\bot, 0,0)$ and partition processes into team $A$ of size $\floor{n/2}$ and team $B$ of size $\ceil{n/2}$.  Assign $op_\tA$ to all processes on team $A$ and $op_\tB$ to all processes on team $B$.
Then, if any sequence of operations assigned to distinct processes is applied to an object
starting in state $q_0$, every operation's result will be the name of the team of the process
that performed the first operation in the sequence.
This is because the state of the object can return to $q_0$ after a process on 
one team takes the first step
only after {\it all} processes on the other team have taken a step.
Thus, for any $j$, all pairs in $R_{A,j}$ will be of the form $(\tA,*)$ and all pairs in $R_{B,j}$ will be of the form $(\tB,*)$, so $R_{A,j}\cap R_{B,j}=\emptyset$.

It remains to show that $T_n$ is not \cond{$(n-1)$}.
To derive a contradiction, suppose there is some $q_0, A, B, op_1, \ldots,$ $op_{n-1}$ that satisfy the definition of \cond{$(n-1)$}.

If $q_0\neq (\bot,0,0)$, then any pair of operations either commute or overwrite, so even the definition of \cond{2} is not satisfied.
So $q_0$ must be $(\bot,0,0)$.
If two processes $p_i$ and $p_j$ on opposite teams are assigned the same operation, then 
$op_i$ and $op_j$ would both take an object from state $q_0$ to the same state, violating condition
\ref{empty-cond} of the definition of \cond{$(n-1)$}.
Thus, without loss of generality, all processes on team $A$ are assigned $op_\tA$ and all processes
on team $B$ are assigned $op_\tB$.

If $|A|\geq \floor{n/2}$, then allowing one process on team $B$ to take a step followed by $\floor{n/2}$ processes on team $A$ would take the object from state $q_0$ back to $q_0$, so $q_0\in Q_B$.  This
violates condition \ref{init-B-cond} of the definition of \cond{$(n-1)$} since $|A|\geq \floor{4/2}=2$.
Thus, $|A|\leq \floor{n/2}-1$.

Similarly, if $|B|\geq \ceil{n/2}$, then allowing one process on team $A$ to take a step followed by $\ceil{n/2}$ processes on team $B$ would take the object from state $q_0$ back to $q_0$, so $q_0\in Q_A$.  This
violates condition \ref{init-A-cond} of the definition of \cond{$(n-1)$}, since $|B|\geq \ceil{4/2}=2$.
Thus, $|B|\leq \ceil{n/2}-1$.

Hence, $n-1=|A|+|B| \leq (\floor{n/2} -1) + (\ceil{n/2}-1) = n-2$, a contradiction.
\end{proof}

\section{Proof of Proposition \ref{same-number}}
\label{Sn-proof}

\begin{proof}
For $n=1$, let $S_1$ be a type that provides only a read operation.

For $n\geq 2$, we define type $S_n$ as follows.
The set of possible states is $\{(winner,row) : winner \in \{\tA,\tB\}, 0\leq row < n\}$.
$S_n$ supports two operations $op_\tA$ and $op_\tB$, as well as a read operation.
If an update operation is applied to an object in state $(winner, row)$, it executes the code in Figure \ref{Sn-fig} atomically to update the state and return a result.

\begin{figure}
{\small
\begin{code}
\firstline
$op_\tA$\nl
\n  if $(winner,row) = (B,0)$ then\nl
\n      $winner \leftarrow \tA$\nl
\p  else \nl
\n      $winner \leftarrow \tB$\nl
        $row \leftarrow 0$\nl
\p  end if\nl
    return ack\nl
\p end $op_\tA$\bl\nl
$op_\tB$\nl
\n  $row \leftarrow (row+1) \mbox{ mod }n$\nl
    if $row =0$ then\nl
\n      $winner \leftarrow \tB$\nl
\p  end if\nl
    return ack\nl
\p end $op_\tB$
\end{code}
}

\input{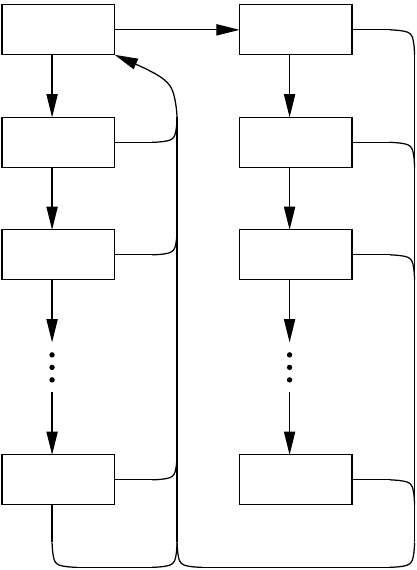_t}
\caption{Behaviour of type $S_n$ used in proof of Proposition \ref{same-number}. 
A transition diagram  is shown, where the $op_\tA$ and $op_\tB$ operations are shown
by the horizontal and vertical arrows leaving each state, respectively.  All operations return ack.
\label{Sn-fig}}
\end{figure}

We first argue that $S_n$ is \cond{$n$}.
Let $q_0=(\tB,0), A=\{p_1\}, B=\{p_2,\ldots,p_n\}, op_1=op_\tA, op_2=op_3=\cdots=op_n=op_\tB$.
Then, $Q_A = \{(\tA,row) : 0\leq row <n\}$ and $Q_B = \{(\tB,row):0\leq row <n\}$ 
satisfy all the conditions of the definition of \cond{$n$}.
It follows from Theorem \ref{sufficient} that $rcons(S_n) \geq n$.

Next, we argue that $S_n$ is not $(n+1)$-discerning, in order to prove that $cons(S_n)\leq n$.
To derive a contradiction, suppose $S_n$ is $(n+1)$-discerning.
Operations assigned to processes on opposite teams cannot commute or overwrite if they
are performed on an object initially in state $q_0$.
(If two operations $op_i$ and $op_j$ assigned to opposite teams commuted, then for
the two sequences $op_i, op_j$ and $op_j, op_i$, the operation $op_i$ gets
the same result ack in both sequences, and both sequences leave the object in the same state.
This would violate the definition of $(n+1)$-discerning.  
A similar argument applies for overwriting operations.)
It is easy to check that $q_0$ must therefore be $(\tB,0)$ and processes on one team (without loss of generality, team $A$) must be assigned the operation $op_\tA$ and processes on the other team $B$ must be assigned the operation $op_\tB$.
If $|A|\geq 2$, then the sequences $op_\tA, op_\tA, op_\tB$ and $op_\tB$ performed on an object would both
take an object from state $(\tB,0)$ to state $(\tB,1)$ and would return the same result for $op_\tB$, violating the definition of $(n+1)$ discerning.
Thus, there must be just one process on team $A$.
So, $|B|=n$.
Consider the following two sequences of operations.
\begin{itemize}
\item
All processes on team $B$ perform $op_\tB$ followed by the process on team $A$ performing $op_\tA$.
\item
The process on team $A$ performs $op_\tA$.
\end{itemize}
Both sequences take the object from state $q_0$ to $(\tA,0)$ and return the same result to $op_\tA$, violating the definition of $(n+1)$-discerning.
This contradiction completes the proof.
\end{proof}

\section{Universal Construction}
\label{universal-app}

The pseudocode for our slightly modified version of Herlihy's universal construction is given in
Figure \ref{universal-code}.
Each list node contains the following fields.
\begin{itemize}
\item $seq$ is initially 0 but is changed to the node's position in the list once the node is added to the list.
\item $op$ is the operation on the implemented object represented by the node; this includes the name of the operation and any arguments to it.
\item $newState$ is the state of the implemented object after the operations on the list up to and including this node have been applied to it.
\item $response$ is the result of the operation represented by the node.
\item $next$ is an instance of RC that will be used to agree upon the next node in the list.
\end{itemize}
Initially, the list contains a single dummy node whose sequence number is 1
and whose $newState$ field stores the initial state of the implemented object.

\ignore{
\begin{verbatim}
Initially, Announce[p] points to a dummy node which has sequence number $1$

Procedure ApplyOperation(opr: Pointer to opr record) {		// code for process i
       while (opr->seq = 0) {
               priority := (Head[i]->seq+1) mod n;
               if (Announce[priority]->seq = 0) then point := Announce[priority];
               else point := opr;
               win := decide(Head[i]->after, point);	
               win->new-state, win->response := apply(win->inv, Head[i]->new-state);
               win->seq = Head[i]->seq +1;
               Head[i] := win;	
      }
      return opr->response;
}

Universal(Invocation inv) {		// Code for process i: operation invocation, including parameters
       allocate a new opr record pointed to by Announce[i] with Announce[i]->inv := inv and Announce[i]->seq =0;
       for j := 0 to n-1 do	
            if (Head[j]->seq > Head[i]->seq) then Head[i] := Head[j];
       ApplyOperation(Announce[i]);
       return Announce[i]->response;
}

Procedure Recover() {   // code for process i
	ApplyOperation(Anounce[i])
}
\end{verbatim}
}

\newcommand{\com}{\>\>// }
\newcommand{\memb}{-\!\!>}
\begin{figure*}
{\small
\begin{code}
\firstline
shared variables:\nl
\n $Announce[1..n]$ of registers, each entry initially points to the dummy node at the beginning of the list\nl
   $Head[1..n]$, each entry initially points to a dummy node at the beginning of the list\bl
\p\nl
{\sc ApplyOperation} // ensures that node $Announce[i]$ is added to the list and returns result of that node's operation \nl
\n  while $Announce[i]{\memb}seq = 0$ \com keep trying until my operation has been added to the list\nl
\n     $priority \leftarrow (Head[i]{\memb}seq + 1) \mbox{ mod } n$ \com id of process who has priority for next list position\nl
       if $Announce[priority]{\memb}seq = 0$ then \com check if process with id $priority$ needs help\nl
\n        $pointer \leftarrow Announce[priority]$ \com try to add operation of process with id $priority$ \nl
\p     else\nl
\n        $pointer \leftarrow Announce[i]$     	\com try to add my own operation \nl
\p     end if\nl
       $winner \leftarrow \mbox{\sc Decide}(Head[i]{\memb}next, pointer)$  \com propose $pointer$ to RC instance associated with $next$ pointer of node $Head[i]$\nl	
       // fill in the information fields of $winner$, the next node in the list.\nl
       $\langle winner{\memb}newState, winner{\memb}response\rangle \leftarrow \mbox{\sc Apply}(winner{\memb}op, Head[i]{\memb}newState)$\nl
       $winner{\memb}seq \leftarrow Head[i]{\memb}seq +1$\nl
       $Head[i] \leftarrow winner$ \com advance to next node	\nl
\p  end while\nl
    return $Announce[i]{\memb}response$\nl
\p 
end {\sc ApplyOperation}\bl
\nl
{\sc Universal}$(op)$  // perform $op$ on implemented object and return result\nl
\n   $nd \leftarrow $ pointer to new list node\nl
     $nd{\memb}op \leftarrow op$ \com $op$ includes name of operation to apply and its arguments\nl
     $nd{\memb}seq \leftarrow 0$\nl
     $Announce[i] \leftarrow nd$\nl
     for $j \leftarrow 0..n-1$ \com make sure $Head[i]$ pointer is not too out of date\nl
\n       if $Head[j]{\memb}seq > Head[i]{\memb}seq$ then\nl
\n           $Head[i]\leftarrow Head[j]$\nl
\p       end if\nl
\p   end for\nl
     return {\sc ApplyOperation}\nl
\p end {\sc Universal}\bl
\nl
{\sc Recover}\nl
\n   return {\sc ApplyOperation}\nl
\p end {\sc Recover}
\end{code}
}
\caption{Universal construction pseudocode for process $p_i$. \label{universal-code}}
\end{figure*}

We remark that a process that crashes and recovers might access the RC instance associated 
with the $next$ pointer of a node multiple times with different input values.
So, we should use the mechanism described in the introduction to mask this behaviour
and ensure that the process's inputs to the RC instance are identical.

\section{Proof of Theorem \ref{rcons-implement}}
\label{rcons-implement-proof}

\begin{proof}
Let $A$ be an $n$-process algorithm for RC that uses
atomic objects of type $T_2$ and registers.
Construct an algorithm $A'$ by replacing each object of type $T_2$ by
a persistent linearizable implementation from atomic objects of type $T_1$ and registers.

We first argue that $A'$ satisfies the recoverable wait-freedom property.
If a process continues to take steps without crashing,
it will eventually complete each operation it calls on a simulted object of type $T_2$,
since the implementation of $T_2$ is wait-free.
Thus, it will eventually produce an output, since $A$ satisfies recoverable wait-freedom.

It remains to show that $A'$ satisfies the agreement and validity properties
of RC.
Let $\alpha'$ be any execution of $A'$.
We construct a corresponding execution $\alpha$ of $A$ as follows.
Remove all internal steps of the implementation of $T_2$ (i.e., all steps of a process between
an invocation step on a $T_2$ object and its subsequent response or process crash, or to the end of the execution if there is no such response or crash).
Each simulated operation on an object of type $T_2$ in $\alpha'$ that is not linearized
must not have a response in $\alpha'$.  We also remove its invocation when forming $\alpha$.
For each remaining operation on a $T_2$ object that has a response in $\alpha'$,
we ``contract'' the operation so that its invocation and response occur
immediately after each other at the linearization point of the operation.
Finally, we consider operations on $T_2$ objects that are linearized but have no response in $\alpha'$, either because
the process executing the operation crashes or does not take enough steps to complete the simulated 
operation.
We move the invocation step to the linearization point and add a response step immediately afterwards.
If the linearization point is \emph{after} the crash that occurred while the operation was pending,
then we also shift this crash step immediately after the response step.

It is easy to check that the constructed execution $\alpha$ is a legal execution of $A$ with atomic objects.
In particular, the sequence of steps taken by any process is the same in $\alpha$ as it is in $\alpha'$
(except for the removal of some invocations of operations on $T_2$ objects that do not terminate, either because they occur immediately before a crash of the process or because the process ceases taking steps). 
Thus, it will satisfy agreement and validity.
The execution $\alpha$ contains the same output steps as $\alpha'$,
so $\alpha'$ also satisfies agreement and validity.
\end{proof}

\section{$rcons(stack) = 1$}
\label{stack-proof}

\newcommand{\push}{\mbox{\sc Push}}
\newcommand{\pop}{\mbox{\sc Pop}}

We use a valency argument to show that $rcons(stack)=1$, i.e., that two processes cannot solve RC
using stacks and registers.
To derive a contradiction, assume there is an algorithm $A$ for two processes
to solve RC using stacks and registers.
As in the proof of Theorem \ref{necessary},
we define valency with respect to a set ${\mathcal E}_A$ of executions of $A$ in which
$p_2$ never crashes and, in any prefix of an execution, the number of crashes by $p_1$
is less than or equal to the number of steps taken by $p_2$.
By the same argument as in Theorem \ref{necessary},
there is a critical execution $\gamma$, 
and the next step specified by the algorithm for every process $p_i$ is an operation $op_i$ on a single stack $O$.
Let $v_i$ be the valency of the execution that is obtained by allowing $p_i$ to perform its next step $op_i$ after $\gamma$.
As argued in Theorem \ref{necessary}, $v_1$ must be different from $v_2$.
The remainder of the proof is a case analysis, similar to Herlihy's proof that 
$cons(stack)=2$ \cite{Her91}.  See Figure \ref{stack}, in which the sequence of
elements on the stack are shown from bottom to top
in the order they are pushed, and $\alpha$ represents a (possibly empty) sequence of elements.

\begin{figure*}
\input{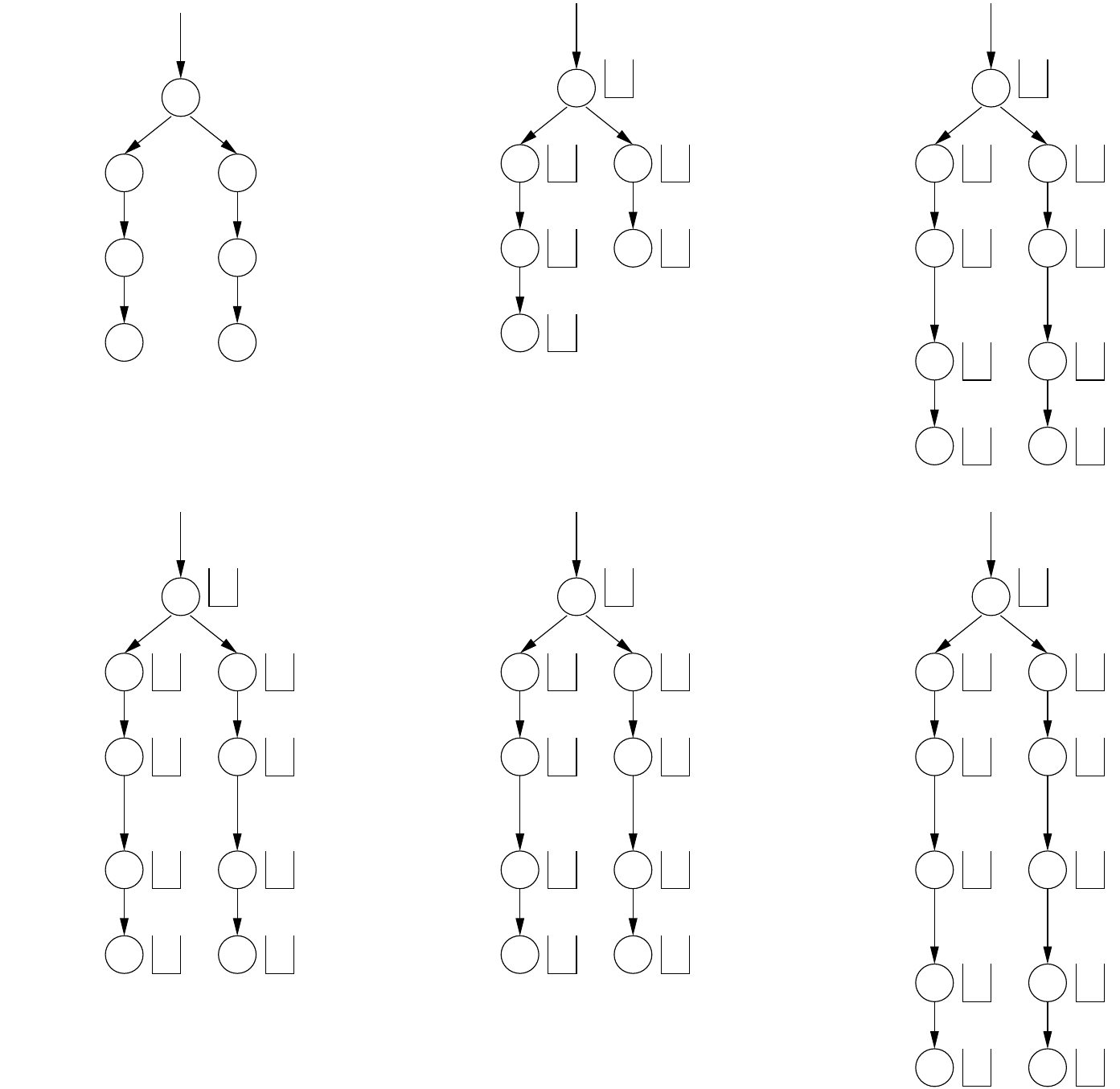_t}
\caption{Impossibility of 2-process recoverable consensus using a stack. \label{stack}}
\end{figure*}

If both $op_1$ and $op_2$ are both \pop, then the steps commute (Figure \ref{stack}(a)).  
If $op_1$ is a \push$(v)$ and $op_2$ is a \pop, and the stack is empty at the end of $\gamma$,
then $op_1$ overwrites $op_2$ (Figure \ref{stack}(b)).  In either of these cases,
we have $v_1=v_2$ by Lemma \ref{same-valency} (which applies even when $O$ is not readable). 
This is the desired contradiction.

If $op_1$ is a \push$(v)$ and $op_2$ is a \pop\ and the stack is non-empty at the end
of $\gamma$, then consider the two extensions of $\gamma$ shown in Figure \ref{stack}(c).
After the operations $op_1$ and $op_2$ are done (in opposite order in the two extensions), 
the only differences between the two resulting states of the system are the local state of $p_2$
and the element that is on the top of the stack.  Thus, if $p_1$ continues to run, it must
run until it pops that top element; otherwise it would output the same value in both extensions,
contradicting the fact that one is $v_1$-valent and the other is $v_2$-valent.
During this solo execution by $p_1$ it takes the same steps in both extensions.
Since $p_2$ has taken a step in both extensions, we can then crash $p_1$.
After $p_1$ crashes, the states of the system in the two extensions are identical except for $p_2$'s state.
Thus, if $p_1$ recovers and executes $A$ to completion, it must output the same value in
the two extensions, contradicting the fact that the two extensions have different valencies.

The other cases shown in Figure \ref{stack}(d) to \ref{stack}(f) are argued similarly to 
the preceding case, and this completes the proof.

A similar argument could be used to show that $rcons(queue) =1$.

}{}

\end{document}